\newcommand{\fredN}[1]{ \marginpar{{ FredN: #1}} }
\newif\ifversionpublique
	\renewcommand{\fredN}[1]{  }
\newtheorem{lemma}{Lemma}[section]
\newtheorem{assumption}{Assumption}[section]
\newtheorem{remark}{Remark}[section]
\newcommand{\N}{\mathbb N}
\newcommand{\R}{\mathbb R}
\newcommand{\HO}{H}
\newcommand{\HP}{H_D}
\newcommand{\RL}{{\mathcal R}}
\newcommand{\NC}{{k_0}}
\newcommand{\id}{{I}}
\newcommand{\QED}{\hspace*{\fill}\rule{2.5mm}{2.5mm}}  
\newenvironment{proof}{{\bf Proof\ }}{\QED\\}
\newcommand{\newtablefromcsv}[3][\ifnumgreater{\N}{0}]{
\csvreader[
  head to column names,
  filter test=#1,
  before reading=\centering\sisetup{table-number-alignment=right},
  before first line={
    \begin{tabular} {|cc|c|c|c|c|c|c|c|c|c|c|c|}\cline{1-10}%
    \empty &\hspace{-0.6cm} \#cores & $n$ & $dim(V_0)$ & $dim(\tilde W_0)$ & setup(s) & \#It & gmres(s) & total(s) & \#It $N_S^{-1}$ \\\hline%
  },
  late after line={\\\hline},
  late after last line={\\\hline\end{tabular}}
]{#3}{}{
  \empty &
  \tablenum[table-format = 3.0, group-minimum-digits=3]{\N} &
  \tablenum[table-format = 10.0]{\dofs} &
  \tablenum[table-format = 5.0, group-minimum-digits=3]{\nCSA} &
  \tablenum[table-format = 5.0, group-minimum-digits=3]{\nCSS} &
  \tablenum[table-format = 4.1]{\setup} &
  \tablenum[table-format = 2.0]{\ItS} &
  \tablenum[table-format = 4.1]{\gmresS} &
  \tablenum[table-format = 4.1]{\totalS} &
  \tablenum[table-format = 2.0]{\MAOIt}
}
}
\newcommand{\vect}[1]{\boldsymbol{#1}}
\newcommand{\strainLinear}{\tens{\vareps}}
\newcommand{\tens}[1]{\underline{\underline{#1}}}
\newcommand{\vareps}{\varepsilon}
\newcommand{\Div}{\vect{\text{div}}\,}
\author[1]{F.~Nataf \thanks{frederic.nataf@sorbonne-universite.fr}}
\author[2]{P.H.~Tournier \thanks{tournier@ljll.math.upmc.fr}}
\affil[1,2]{\footnotesize Sorbonne Université, CNRS, Université de Paris, Inria Equipe Alpines, Laboratoire Jacques-Louis Lions, F-75005 Paris, France\\ }
\title{ A GenEO Domain Decomposition method\\ for Saddle Point problems}
\begin{document}

\maketitle

\setcounter{tocdepth}{5}
\tableofcontents
\pagenumbering{arabic}

\begin{abstract}
	We introduce an adaptive element-based domain decomposition (DD) method for solving saddle point problems defined as a block two by two matrix. The algorithm does not require any knowledge of the constrained space. We assume that all sub matrices are sparse and that the diagonal blocks are spectrally equivalent to a sum of positive semi definite matrices. The latter assumption enables the design of adaptive coarse space for DD methods that extends the GenEO theory \cite{Spillane:2014:ASC} to saddle point problems. Numerical results on three dimensional elasticity problems for steel-rubber structures discretized by a finite element with continuous pressure are shown for up to one billion degrees of freedom.
\end{abstract}

\section{Introduction} 
\label{sec:introduction}

Solving saddle point problems with parallel algorithms is very important for many branches of scientific computing: fluid and solid mechanics, computational electromagnetism, inverse problems and optimization.

We are interested in domain decomposition (DD) methods since they are naturally well-fitted to modern parallel architectures. For specific systems of partial differential equations with a saddle point formulation, efficient DD methods have been designed, see e.g.,~\cite{Pasciak:2002:OSM,Klawonn:1998:OPC,Pavarino:2002:BNN} and \cite{Toselli:2005:DDM} references therein.  Also in \cite{haferssas:2017:additive}, a GenEO coarse space is introduced for the P.L.~Lions' algorithm and its efficiency is mathematically proved for symmetric definite positive problems. In the above article, numerical experiments are conducted on three dimensional elasticity problems for steel-rubber structures discretized by a finite element with continuous pressure. Although the method works well in practice, the method lacks theoretical convergence guarantees and also  demands the design of specific absorbing conditions as interface conditions. 

As for a convergence rate analysis for a discretization with a continuous pressure, the recent article \cite{Widlund:2021:block} generalizes the theory developed in~\cite{Tu:2015:FDT} to the case of nonzero pressure block but under the assumption that the discontinuities are resolved by the subdomains.

Compared to the above mentioned works, the method we propose has a provable control on the condition number for zero or non zero pressure block with a continuously discretized pressure also in the case arbitrary heterogeneities and bypasses the need for absorbing boundary conditions. \\

Here as in \cite{Murphy:2000:NPI,Benzi:2005:NSS,sturler:2005:BDC,Rees:2020:EBP}, we consider the problem in the form of a two by two block matrix. Let $m$ and $n$ be two integers with $m < n$. Let $A$ $n\times n$ SPD matrix and $B$ be a sparse $m\times n$ full rank matrix of constraints and $C$ a $m\times m$ non negative matrix (in particular, $C=0$ is allowed), we consider the following saddle point matrix:
\begin{equation}
	\label{eq:saddlepointmatrix}
{\mathcal A}:=
\left(\begin{array}{cr}
	A & B^T\\
	B & -C\ \ 
\end{array}\right)\,.
\end{equation}
When the kernel of matrix $B$ is known, very efficient multigrid methods have been designed in the context of finite element methods, see e.g.,~\cite{Cahouet:1988:F3F,Hiptmair:1997:MMH,Hiptmair:1998:MMM,Arnold:2000:MHH,Reitzinger:2002:AMM,Farrell:2019:ALP}. Without this knowledge, it is nevertheless also possible to design efficient geometric multigrid methods as in~\cite{Drzisga:2018:analysis} where the fine mesh is obtained by several uniform mesh refinements.\\ 

Here we do not assume any knowledge on the kernel of matrix $B$ and we work with arbitrary meshes. The following three factor factorization, see e.g.,~\cite{Benzi:2008:SPT}:
\[
\left(\begin{array}{cc}
	A & B^T\\
	B & -C
\end{array}\right)
 = \left(\begin{array}{cc}
	\id & 0\\
	B A^{-1} & \id
\end{array}\right)
 \left(\begin{array}{cc}
	A & 0\\
	0 & - (C +\, B A^{-1}B^T)
\end{array}\right)
\left(\begin{array}{cc}
	\id & A^{-1} B^T \\
	0 & \id
\end{array}\right)\,,
\]
shows that solving the linear system with $\mathcal A$ can be performed by solving sequentially  linear systems with $A$ and one with the Schur complement $C+B A^{-1}B^T$. In order to build a scalable method, we assume that all three matrices $A$, $B$ and $C$ are sparse and that $A$ and $C$ are the sum of positive semi definite matrices. This is easily achieved in finite element or finite volume contexts for partial differential equations. The latter assumption enables the design of adaptive coarse space for DD methods, see~\cite{Dolean:2015:IDDSiam}.

More precisely, in \S~\ref{sec:preconditioning_of_matrix_a}, we recall the two-level additive Schwarz method denoted $M_A$ used to precondition the matrix $A$ (the primal-primal block of the saddle point problem). Then in \S~\ref{sec:schur_complement_preconditioning}, we introduce the operator $P_S:=C+B M_A^{-1}B^T$ which is spectrally equivalent to the Schur complement $S$. Its preconditioning is studied in \S~\ref{sec:preconditioning_of_s_2}. In \S~\ref{sec:recap}, we combine these different components to define in a compact way the parallel saddle point preconditioner. In \S~\ref{sec:num}, we present weak and strong scaling experiments on large scale elasticity problems for steel-rubber structures discretized by a finite element with continuous pressure. These problems are highly heterogeneous since the Lamé-Poisson coefficients of the rubber are $(E_1,\nu_1)=(\num{1e+7},0.4999)$ and those of the steel are $(E_2,\nu_2)=(\num{2e+9},0.35)$.     

\section{Preconditioning of matrix $A$} 
\label{sec:preconditioning_of_matrix_a}

The sparse $n\times n$ SPD matrix $A$ is preconditioned by a two-level Schwarz type DD method:
\begin{equation}
	\label{eq:M2}
	M_A^{-1} := R_0^T\,(R_0 A R_0^T)^{-1}\, R_0 + \sum_{i=1}^N R_i^T\,(R_i A R_i^T)^{-1}\, R_i \,,
\end{equation}
where 
$R_0$ is full rank $dim(V_0)\times n$ where $V_0$ denotes the space spanned by the columns of $R_0^T$. The following assumptions are crucial to ensure the final method is scalable:
\begin{assumption}[dimension and structure of the coarse space]
	\label{as:dimstructV0}
	\ 
	\begin{itemize}
		\item The coarse space dimension, $dim(V_0)$, is $O(N)$ typically 10-20 times $N$.
		\item The coarse space is made of extensions by zero of local vectors. 
	\end{itemize}
\end{assumption}
Using the GenEO method~\cite{Spillane:2014:ASC}, it is possible to fix in advance two constants $0<\lambda_m < 1 < \lambda_M$ and then build a coarse space $V_0$ such that $M_A^{-1} $ is spectrally equivalent to $A^{-1}$:
\begin{equation}
	\label{eq:spectralequivA}
   \frac{1}{\lambda_M}\, M_A^{-1}  \le A^{-1} \le \frac{1}{\lambda_m}\,M_A^{-1} \,,
\end{equation}

The dimension of the coarse space $V_0$ is typically proportional to the number of subdomains. This corresponds to Assumption~\ref{as:dimstructV0}. More precisely, for each subdomain $1\le i\le N$, let $D_i$ be a non negative diagonal matrix that defines a discrete partition of unity, i.e.:
\[
	\sum_{i=1}^N  R_i^T \, D_i\,  R_i = \id\,,
\]
and $A_i^{Neu}$ be a symmetric semi-definite positive matrix such that for the maximum multiplicity of the intersection of subdomains denoted $k_1$, we have:
\begin{equation}
	\label{eq:AiNeu}
   \sum_{i=1}^N   R_i^T A_i^{Neu} R_i  \le k_1  A\,.
\end{equation}
Then, the GenEO eigenvalue problem is local to each subdomain and reads:\\
Find $(\lambda_{ik} , V_{ik})\in \R\times \R^{rank(R_i)}$ such that:
\begin{equation}
	\label{eq:geneoA}
	(D_i R_i A R_i^T D_i)\, V_{ik} = \lambda_{ik} A_i^{Neu} V_{ik}\,.	
\end{equation}
Let $\tau>0$ be a  positive threshold, the coarse space is the vector space spanned by the vectors $R_i^T D_i V_{ik}$ for all $\lambda_{ik} > \tau$. Then inequality~\eqref{eq:spectralequivA} holds with $\lambda_m := (1+k_1\tau)^{-1} $ and $\lambda_M := k_0$ where $k_0$ is the maximal number of neighbours of a subdomain including itself.\\

Our aim in the next section is to precondition the Schur complement $-S$ of matrix ${\mathcal A}$ (eq.~\eqref{eq:saddlepointmatrix}) where 
\begin{equation}
	\label{eq:S}
\boxed{	S := C + B\,A^{-1}\,B^T\,. }
\end{equation}
This is achieved by a series of spectrally equivalent matrices or  preconditioners, the first one being $P_S$ defined as follows:
\begin{equation}
	\label{eq:S2}
\boxed{	P_S := C + B\,M_A^{-1}\,B^T\,}
\end{equation}
and the final one being $N_S^{-1}$ introduced in \S~\ref{sub:preconditioner_for_m_s}, see eq.~\eqref{eq:NS}. Finally in \S~\ref{sub:dd_solver_for_the_saddle_point_system} we will introduce the preconditioner of the saddle point matrix ${\mathcal A}$.

\section{Schur complement preconditioning} 
\label{sec:schur_complement_preconditioning}

\subsection{First spectrally equivalent preconditioner} 
\label{sub:first_spectrally_equivalent_preconditioner}

Note that $P_S$ is by definition a sum of $N+2$ positive semi definite matrices
\begin{equation}
	P_S := B\,R_0^T\,(R_0 A R_0^T)^{-1}\, R_0\,B^T\, + C + \sum_{i=1}^N B R_i^T\,(R_i A R_i^T)^{-1}\, R_i B^T\,. 
\end{equation}
Since $B$ is a sparse matrix, it is interesting to introduce, for all $0\le i\le N$, $\tilde R_i$ the restriction operator on the support of $\Im (B\, R_i^T)$ so that
\begin{equation}
\label{eq:Ritilde}
	 \tilde R_i^T \tilde R_i B R_i^T = B R_i^T.
\end{equation}
Then by defining for $0\le i\le N$,
\[
	\tilde B_i := \tilde R_i\, B\, R_i^T\,,
\]
the operator $P_S$ is rewritten as 
\begin{equation}
\label{eq:MS}
	P_S := \tilde R_0^T \tilde B_0\,(R_0 A R_0^T)^{-1}\, \tilde B_0^T \tilde R_0\, + C + \sum_{i=1}^N \tilde R_i^T \tilde B_i\,(R_i A R_i^T)^{-1}\, \tilde B_i^T \tilde R_i\,.
\end{equation}
We consider a partition of unity on $\HO:=\R^m$ defined with local diagonal matrices $(\tilde D_i)_{1\le i\le N}\in\R^{dim (Im (B\, R_i^T))\times dim (Im (B\, R_i^T))}$:
\[
	\sum_{i=1}^N \tilde R_i^T \, \tilde D_i\, \tilde R_i = \id_{\HO}\,.
\]
\begin{remark}
	This partition of unity exists since 
\[
	B = \sum_{i=1}^N B R_i^T D_i R_i = \sum_{i=1}^N \tilde R_i^T \tilde R_i  (B\,R_i^T D_i R_i)
\]
is full rank.
\end{remark}

We make the following assumption
\begin{assumption}
	\label{as:Cisasum}
	There exist symmetric positive semidefinite matrices $(\tilde C_i)_{1\le i \le N}$ such that for some constant $\tilde k_1$ 
\begin{equation}
	\label{eq:Ctildei}
	C \le \sum_{i=1}^N \tilde R_i^T\,\tilde C_i\,\tilde R_i \le  \tilde k_1\,C \,.	
\end{equation}
\end{assumption}
This assumption is not so restrictive. Indeed, for a minimization problem with constraints enforced exactly without penalization nor relaxation, we have $C=0$ and the assumption is automatically satisfied. Moreover, we have:
\begin{lemma}
	\label{th:Cdiagonalisasum}
	If $C$ is a diagonal matrix, Assumption~\ref{as:Cisasum} is satisfied with $\tilde k_1=1$.
\end{lemma}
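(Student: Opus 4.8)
The plan is to build the local pieces $\tilde C_i$ by localizing $C$ to each coordinate block and reweighting with the partition of unity. Concretely, I would set
\[
  \tilde C_i := \bigl(\tilde R_i\,C\,\tilde R_i^T\bigr)\,\tilde D_i\,, \qquad 1\le i\le N\,,
\]
and then verify the two required properties: that each $\tilde C_i$ is symmetric positive semidefinite, and that $\sum_{i=1}^N \tilde R_i^T\,\tilde C_i\,\tilde R_i = C$.

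For the first property, observe that since $C$ is diagonal and $\tilde R_i$ is a restriction operator selecting a subset of coordinates, the block $\tilde R_i\,C\,\tilde R_i^T$ is simply the principal submatrix of $C$ on that index set, hence itself diagonal with non-negative diagonal entries. As $\tilde D_i$ is a non-negative diagonal matrix (a partition-of-unity weight), the product $\tilde C_i$ of these two commuting non-negative diagonal matrices is again a non-negative diagonal matrix, and is therefore symmetric positive semidefinite.

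For the reconstruction, I would start from the partition of unity $\sum_{i=1}^N \tilde R_i^T\,\tilde D_i\,\tilde R_i = \id_{\HO}$ and multiply on the left by $C$, giving $C = \sum_{i=1}^N C\,\tilde R_i^T\,\tilde D_i\,\tilde R_i$. The key identity to establish is
\[
  C\,\tilde R_i^T = \tilde R_i^T\,\bigl(\tilde R_i\,C\,\tilde R_i^T\bigr)\,,
\]
which follows from the diagonality of $C$: writing $P_i := \tilde R_i^T\,\tilde R_i$ for the diagonal coordinate projection, $P_i$ commutes with the diagonal matrix $C$ and satisfies $P_i\,\tilde R_i^T = \tilde R_i^T$, so that $\tilde R_i^T\bigl(\tilde R_i\,C\,\tilde R_i^T\bigr) = P_i\,C\,\tilde R_i^T = C\,P_i\,\tilde R_i^T = C\,\tilde R_i^T$. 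Substituting this identity into each summand yields $C\,\tilde R_i^T\,\tilde D_i\,\tilde R_i = \tilde R_i^T\,\tilde C_i\,\tilde R_i$, and summing over $i$ gives the claim.

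The only delicate point, and the place where the hypothesis is genuinely used, is the key identity above: for a non-diagonal $C$ the product $\tilde R_i^T\bigl(\tilde R_i\,C\,\tilde R_i^T\bigr)$ discards the couplings between coordinates inside and outside the support of $\tilde R_i$, so the identity fails in general. Diagonality is exactly what makes $C$ commute with the coordinate projections $P_i$ and hence makes this localization exact; everything else is routine manipulation of restriction operators.
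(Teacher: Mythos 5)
Your proof is correct and follows essentially the same route as the paper: the same choice $\tilde C_i := \tilde R_i\,C\,\tilde R_i^T\,\tilde D_i$, the same use of the partition of unity, and the same localization identity $C\,\tilde R_i^T = \tilde R_i^T(\tilde R_i\,C\,\tilde R_i^T)$ that the paper uses implicitly. You merely spell out the commutation argument with $P_i=\tilde R_i^T\tilde R_i$ that justifies this identity, which the paper leaves tacit.
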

\begin{proof}
If $C$ is a diagonal matrix, it suffices to take
\[
	\tilde C_i := \tilde R_i C \tilde R_i^T \tilde D_i\,,
\]
which is a diagonal non negative matrix. Indeed, we then have for all ${\mathbf P}\in \R^m$:
\[
	C\, {\mathbf P} = \sum_{i=1}^N C\, \tilde R_i^T \tilde D_i \tilde R_i \,{\mathbf P}
					= \sum_{i=1}^N \tilde R_i^T (\tilde R_i C\, \tilde R_i^T \tilde D_i) \tilde R_i \,{\mathbf P}\,.
\]
\end{proof}
\begin{remark}
Note also that in the finite element case it suffices to restrict to the subdomains the variational form that defines $C$. In this case $\tilde k_1$ is the multiplicity of the intersections of the subdomains used to define the $\tilde C_i$'s.
\end{remark}
Let us define the operator $M_S$ as the sum of a non local but low rank matrix $S_0$:
\[
	S_0 := \tilde R_0^T \tilde B_0\,(R_0 A R_0^T)^{-1}\, \tilde B_0^T \tilde R_0\,,
\]
 and of $S_1$ which is a sum of $N$ local positive semi definite matrices:
\[
\boxed{	S_1 := \sum_{i=1}^N \tilde R_i^T (\tilde C_i + \tilde B_i\,(R_i A R_i^T)^{-1}\, \tilde B_i^T) \tilde R_i\,,}
\]
that is 
\[
\boxed{	M_S := S_0 + S_1\,.}
\]
By Assumption~\ref{as:Cisasum}, the operator $M_S$ is spectrally equivalent to $P_S$ which is also spectrally equivalent to $S$. Note that we may assume that $S_1$ is invertible whereas it does not make sense for $S_0$. Note that if it is not the case, since we build a preconditioner, $S_1$ can be regularized by a small diagonal term with little effect on the efficiency of the preconditioner.\\

We consider next the construction of a preconditioner $M_{S_1}^{-1}$ to $S_1$ leveraging the fact that $S_1$ is a sum of symmetric semidefinite positive matrices. Let us stress that this property stems from the domain decomposition structure of the preconditioner for matrix $A$ which apart from the coarse level is block diagonal.



\subsection{Preconditioning of $S_1$} 
\label{sec:preconditioning_of_s_2}

It is well known that one level domain decomposition methods are in most cases not scalable. Nevertheless, the study of a one-level method in \S~\ref{sec:One-level DD} enables the identification of a suitable coarse space that will be efficiently embedded in a scalable two-level domain decomposition method in \S~\ref{sec:twolevelMS1}.

Our studies of the spectrum of the DD preconditioners are based on the Fictitious Space lemma which is recalled here, see \cite{Nepomnyaschikh:1991:MTT} for the original paper and \cite{Griebel:1995:ATA} for a modern presentation. 
 \begin{lemma}[Fictitious Space Lemma, Nepomnyaschikh 1991]
   \label{th:fictitiousSpaceLemma}  
   Let $\HO$ and $\HP$ be two Hilbert spaces, with the scalar products
   denoted by $(\cdot,\cdot)$ and $(\cdot,\cdot)_D$. Let the symmetric positive bilinear forms $a\,:\,\HO \times \HO \rightarrow \R$ and $b\,:\,\HP \times \HP \rightarrow \R$, generated by the s.p.d. operators $A\,:\,\HO \rightarrow \HO$ and $B\,:\,\HP \rightarrow \HP$, respectively (i.e. $(Au,v)=a(u,v)$ for all $u,v\in\HO$ and $(Bu_D,v_D)_D=b(u_D,v_D)$ for all $u_D,v_D\in\HP$). Suppose that there exists a linear operator $\RL\,:\,\HP \rightarrow \HO$ that satisfies the following three assumptions:
\begin{itemize}
	\item[(i)] $\RL$ is surjective.
	\item[(ii)] Continuity of $\RL$: there exists a positive constant $c_R$ such that 
	\begin{equation}\label{eq:cr}
	a(\RL u_D,\RL u_D) \le c_R\cdot  b(u_D,u_D)\ \ \forall u_D\in \HP\,.
	\end{equation}
	\item[(iii)] Stable decomposition: there exists a positive constant $c_T$ such that for all $u\in\HO$ there exists $u_D\in \HP$ with $\RL u_D=u$ and 
	\begin{equation}\label{eq:ct}
	c_T\cdot b(u_D,u_D) \le a(\RL u_D,\RL u_D)=a(u,u)\,.
	\end{equation}	
\end{itemize}
We introduce the adjoint operator $\RL^*\,:\,\HO\rightarrow \HP$ by
$(\RL u_D,\,u) = (u_D,\,\RL^* u)_D$ for all $u_D\in\HP$ and
$u\in\HO$.\\
 Then, we have the following spectral estimate
\begin{equation}\label{eq:fictestimate}
c_T\cdot  a(u,u) \le a\left(\RL B^{-1} \RL^* A u,\,u\right) \le c_R\cdot  a(u,u)\,,\ \ \forall u\in\HO\,
\end{equation}
which proves that the eigenvalues of operator $\RL B^{-1} \RL^* A$ are bounded from below by $c_T$ and from above by $c_R$. 
 \end{lemma}
The Fictitious Space Lemma (FSL) can also conveniently be related to the book~\cite{Toselli:2005:DDM}: the first assumption corresponds to equation~(2.3), page 36 where the global Hilbert space is assumed to satisfy a decomposition into subspaces, the second assumption is related to Assumptions~2.3 and 2.4,  page~40 and the third assumption corresponds to the Stable decomposition Assumption~2.2 page~40.\\

\label{sub:equivalent_preconditioner_on_h}
\subsubsection{One-level DD for $S_1$}
	\label{sec:One-level DD}
As in~\cite{Dolean:2015:IDDSiam} chapter~7, we begin with a one-level Neumann-Neumann type DD method defined in terms of the Fictitious Space Lemma (FSL). This study will be the basis for constructing the two-level preconditioner in \S~\ref{sec:twolevelMS1}. Recall the formula for the one-level preconditioner $M_{S_1,1}^{-1}$ for $S_1$:
\begin{equation}
	\label{eq:MS1onelevel}
	M_{S_1,1}^{-1} := \sum_{i=1}^N \tilde R_i^T \tilde D_i\, (\tilde C_i + \tilde B_i\,(R_i A R_i^T)^{-1}\,\tilde B_i^T)^\dag\, \tilde D_i \tilde R_i\,,
\end{equation}
where the superscript $\dag$ denotes a pseudo inverse in case the operator in brackets is not invertible. For sake of simplicity, we assume that they are invertible so that the following framework enables the study of $M_{S_1,1}$ with the fictitious space lemma. Let 
\[
	\HO:=\R^m
\]
 and let $a$ be the following bilinear form:
 \[
 	a:\HO \times \HO \rightarrow \R  \ \ a({\mathbf P},{\mathbf Q}) := (S_1{\mathbf P},{\mathbf Q} )\,.
\]
Let 
\[
	\HP := \Pi_{i=1}^N \R^{rank (\tilde B_i)}\,,
\]
and $b$ be the following bilinear form:
 \begin{equation}
	 \label{eq:bS1-onelevel}
 	b:\HP \times \HP \rightarrow \R
  \ \ b( ({\mathbf P}_i)_{1\le i\le N} , ({\mathbf Q}_i)_{1\le i\le N} ) := 
  \sum_{i=1}^N (\,(\tilde C_i + \,\tilde B_i\,(R_i A R_i^T)^{-1}\,\tilde B_i^T)\,{\mathbf P}_i\, ,\,{\mathbf Q}_i )\,.
 \end{equation}
We define $\RL$:
\[
	\begin{array}{rcl}
		\RL: &\HP \rightarrow & \HO \\
		     & ({\mathbf P}_i)_{1\le i \le N} \mapsto & \sum_{i=1}^N \tilde R_i^T \tilde D_i {\mathbf P}_i\,.
	\end{array}
\]

We now check the three assumptions of the FSL.
\paragraph{Surjectivity of $\RL$}
For any ${\mathbf P} \in\HO$, we have: 
\[
	{\mathbf P} =  \sum_{i=1}^N \tilde R_i^T \tilde D_i \tilde R_i {\mathbf P}\,,
\]
so that 
\begin{equation}
	\label{eq:decompositiononelevel}
{\mathbf P} = \RL( (\tilde R_i {\mathbf P} )_{1\le i\le N})\,.
\end{equation}

\paragraph{Continuity of $\RL$} 
\label{par:continuity_of_rlonelevel}
On one hand, we have using $\NC$ the number of neighbours of a subdomain (including itself), $\NC:= \max_{1\le i\le N} \# {\mathcal O}(i) $ where  ${\mathcal O}(i) := \{1 \le j \le N \ | \  \tilde R_i\,\tilde D_i\, S_1\,\tilde D_j\,  \tilde R_j^T \neq 0\}$:
\[
	\begin{array}{rcl}
		a(\RL({\mathcal P}) \,,\, \RL({\mathcal P})) 
		&=&  \| (\sum_{i=1}^N \tilde R_i^T \tilde D_i {\mathbf P}_i)\|_a^2 
		\le \NC\, \sum_{i=1}^N  \| \tilde R_i^T \tilde D_i {\mathbf P}_i \|_a^2 \\
		&=& \NC\, ( \left(\sum_{j\in {\mathcal O}(i)} \tilde R_j^T (\tilde C_j + \tilde B_j\,(R_j A R_j^T)^{-1}\, \tilde B_j^T) \tilde R_j \right)  \tilde R_i^T \tilde D_i {\mathbf P}_i \,,\, \tilde R_i^T \tilde D_i {\mathbf P}_i) \,.
	\end{array}
\]
On the other hand, we have by definition:
\[
	b({\mathcal P}\,,\,{\mathcal P}) =  \sum_{i=1}^N (\, (\tilde C_i + \tilde B_i\,(R_i A R_i^T)^{-1}\,\tilde B_i^T)\,{\mathbf P}_i\, ,\,{\mathbf P}_i )\,.
\]
We can take:
\begin{equation}
	\label{eq:cRMS1onelevel}
	c_R := \max_{1\le i\le N} \max_{P_i\in\R^{rank(\tilde B_i)}} \frac{( \sum_{j\in \mathcal{O}(i)} \tilde R_i\,\tilde R_j^T\, (\tilde C_j + \tilde B_j\,(R_j A R_j^T)^{-1}\, \tilde B_j^T)\, \tilde R_j \,  \tilde R_i^T \tilde D_i {\mathbf P}_i \,,\,  \tilde D_i {\mathbf P}_i) }{(\,(\tilde C_i + \tilde B_i\,(R_i A R_i^T)^{-1}\,\tilde B_i^T)\,{\mathbf P}_i\, ,\,{\mathbf P}_i )} \,,	
\end{equation}
but we have no control on $c_R$ which may be large. This motivates the introduction of a spectral coarse space in \S~\ref{sec:twolevelMS1} with the generalized eigenvalue problem~\eqref{eq:geneoforS1} .
\paragraph{Stable decomposition} 
\label{par:stable_decompositiononelevel}
Let ${\mathbf P}\in\HO$, we start from its decomposition~\eqref{eq:decompositiononelevel} and estimate its $b$-norm
\[
	\begin{array}{rcl}
 b({\mathcal P}\,,\,{\mathcal P}) &=& \sum_{i=1}^N (\, (\tilde C_i + \tilde B_i\,(R_i A R_i^T)^{-1}\,\tilde B_i^T)\, \tilde R_i {\mathbf P} \, ,\,\tilde R_i {\mathbf P} ) = a( {\mathbf P} \, ,\, {\mathbf P} )\,,
	\end{array}
\]		
so that we can take $c_T=1$ which is an optimal value.

\subsubsection{Two-level DD for $S_1$}
	\label{sec:twolevelMS1}
In order to control the value of $c_R$ defined above, we introduce two two-level preconditioners. The first one is similar to what is done for Schur complement methods in \cite[\S~7.8.3, page 197]{Dolean:2015:IDDSiam} or in \cite{Spillane:2013:ASC}. The second one is a cheaper lightweight version of the former but then not with a full control of its spectrum. In practice, in our numerical experiments both methods perform similarly in terms of iterations counts and thus with an advantage in terms of elapsed time for the second preconditioner. For both two-level methods, the generalized eigenvalue value problem in each subdomain $1\le i\le N$ to be solved to build the coarse space is inferred from the definition of the constant $c_R$ in eq.~\eqref{eq:cRMS1onelevel}:
\begin{equation}
	\label{eq:geneoforS1}
	\begin{array}{r}
\tilde D_i\,\tilde R_i\,   \left( \sum_{j\in {\mathcal O}(i)}  \tilde R_j^T (\tilde C_j + \tilde B_j\,(R_j A R_j^T)^{-1}\, \tilde B_j^T) \tilde R_j \right) \,\,\tilde R_i^T  \tilde D_i \ {\mathbf P}_{i\,k}  \\
=  \lambda_{i\,k} (\tilde C_i + \tilde B_i\,(R_i A R_i^T)^{-1}\,\tilde B_i^T)\ {\mathbf P}_{i\,k} \,.
	\end{array}
\end{equation}

This generalized eigenvalue problem contains inverses of some local matrices on both sides and in order to solve it via e.g. Arpack, we have to factorize one of them. This difficulty can be overcome since the right matrix is the inverse of the Schur complement of the extended sparse matrix~\eqref{eq:systemeetendu}. It is thus amenable to a factorization using only sparse matrix factorizations.

It can be solved in $O(1)$ communications. The coarse space is defined as follows. Let $\tau_{S_1}$ be a user-defined threshold; for each subdomain $1 \le i \le N$, we introduce a subspace $\tilde W_i\subset \R^{rank(\tilde B_i)}$:
\begin{equation}
	\label{eq:BSS1i}
	\tilde W_i := \text{Span}\{ {\mathbf P}_{i\,k}\ |\ \lambda_{i\,k} > \tau_{S_1}  \}\,.
\end{equation} 
Then the coarse space $\tilde W_0$ is defined (with some abuse of notation)
\[
\boxed{	\tilde W_0 := \bigoplus_{1\le i\le N} \tilde R_i^T \tilde D_i \tilde W_i \,.}
\]
Let $Z_{S_1}$ be a rectangular matrix whose columns span the coarse space $\tilde W_0$. Let $\tilde P_0$ be the $S_1$ orthogonal projection from $\R^m$ on $\tilde W_0$ whose formula is 
\begin{equation}
	\label{eq:P0tilde}
	\tilde P_0 = Z_{S_1} (Z_{S_1}^T S_1 Z_{S_1})^{-1} Z_{S_1}^T S_1\,.
\end{equation}
In order to avoid a too cumbersome analysis, we make the following assumption:
\begin{assumption}
	\label{eq:Siinvertible}
We assume that for all subdomains $1\le i \le N$, $\tilde C_i + \tilde B_i\,(R_i A R_i^T)^{-1}\,\tilde B_i^T$ is invertible.
\end{assumption}
Finally, the first preconditioner for $S_1$ reads
\begin{equation}
	\label{eq:S1prec}
	\begin{array}{rcl}
  M_{S_1}^{-1} &:=& Z_{S_1}\,(Z_{S_1}^T {S_1} Z_{S_1})^{-1}\,Z_{S_1}^T + (\id - \tilde P_0)\\
   & \times & \left(\sum_{i=1}^N \tilde R_i^T \tilde D_i\, (\tilde C_i + \tilde B_i\,(R_i A R_i^T)^{-1}\,\tilde B_i^T)^\dag\, \tilde D_i \tilde R_i \right) (\id-  \tilde P_0^T)\,.		
	\end{array}
\end{equation}
If Assumption~\ref{eq:S1prec} is not satisfied for some subdomain $i$, we should incorporate the kernel of $\tilde C_i + \tilde B_i\,(R_i A R_i^T)^{-1}\,\tilde B_i^T$ in the coarse space and make use of a pseudoinverse in the definition of the preconditioner as it is done for the FETI method, see~\cite{Farhat:1991:MFE} or \cite{Dolean:2015:IDDSiam} \S~7.8.2 and references therein.

Recall that from~\cite{Dolean:2015:IDDSiam} chapter~7, we have for $\alpha := \max(1 , \frac{\NC}{\tau_{S_1}})$:
\[
  \frac{1}{\alpha} M_{S_1}^{-1} 	\le S^{-1} \le M_{S_1}^{-1} \,.
\]

A careful implementation of \eqref{eq:S1prec} requires two coarse solves. In order to simplify the application of $M_{S_1}^{-1}$, we can save one coarse solve by proposing an alternative definition of preconditioner $M_{S_1}^{-1}$ without the global projection $\tilde P_0$. We directly define it using the FSL framework. We keep definitions for $\HO$ and $a$ from the beginning of \S~\ref{sec:One-level DD}. But now the space $\HP$ is defined as:
\[
	\HP := \R^{rank (Z_{S_1})} \times \Pi_{i=1}^N \R^{rank (\tilde B_i)}\,,
\]
the operator $\RL$ is defined using for $1\le i \le N$ the $(\tilde C_i + \tilde B_i\,(R_i A R_i^T)^{-1}\, \tilde B_i^T)$ orthogonal projections $\mathbf{\xi_i}$ on $\tilde W_i$ and parallel to $\text{Span}\{ {\mathbf P}_{i\,k}\ |\ \lambda_{i\,k} \le \tau_{S_1}  \}$ :
\[
	\begin{array}{rcl}
		\RL: &\HP \rightarrow & \HO \\
		     &  ({\mathbf P}_i)_{0\le i \le N} \mapsto & Z_{S_1} {\mathbf P}_0 + \sum_{i=1}^N \tilde R_i^T \tilde D_i (\id-\xi_i) {\mathbf P}_i\,,
	\end{array}
\]
and $b$ is the following bilinear form:
 \[
 	b:\HP \times \HP \rightarrow \R\,,
  \ \ b( ({\mathbf P}_i)_{0\le i \le N} , ({\mathbf Q}_i)_{0\le i \le N} ) := (S_1 Z_{S_1} {\mathbf P}_0 , Z_{S_1} {\mathbf Q}_0 ) +
  \sum_{i=1}^N (\,(\tilde C_i + \,\tilde B_i\,(R_i A R_i^T)^{-1}\,\tilde B_i^T)\,{\mathbf P}_i\, ,\,{\mathbf Q}_i )\,.
\]
We now check the three assumptions of the FSL.
\paragraph{Surjectivity of $\RL$}
For any ${\mathbf P} \in\HO$, we have: 
\[
	{\mathbf P} =  \sum_{i=1}^N \tilde R_i^T \tilde D_i \xi_i \tilde R_i {\mathbf P}
	+ \sum_{i=1}^N \tilde R_i^T \tilde D_i (\id-\xi_i) \tilde R_i {\mathbf P}\,.
\]
Note that since the first sum belongs to $\tilde W_0$, there exists ${\mathbf P}_0$ such that $\sum_{i=1}^N \tilde R_i^T \tilde D_i \xi_i \tilde R_i {\mathbf P}=Z_{S_1} {\mathbf P}_0$. Then, we have:
\begin{equation}
	\label{eq:decompositiontwolevel2}
{\mathbf P} = \RL( {\mathbf P}_0 , (\tilde R_i {\mathbf P} )_{1\le i\le N})\,.
\end{equation}
\paragraph{Continuity of $\RL$} 
For ${\mathcal P}\in\HP$ we have:
\[
	\begin{array}{rcl}
		a(\RL({\mathcal P}) \,,\, \RL({\mathcal P})) 
		&=& (S_1 \RL({\mathcal P}) \,,\, \RL({\mathcal P})) \\
		&\le& 2\,( (S_1 Z_{S_1} {\mathbf P}_0 \,,\, Z_{S_1} {\mathbf P}_0))
		 + (S_1 \sum_{i=1}^N (\id-\xi_i) \tilde R_i^T \tilde D_i {\mathbf P}_i \,,\, \sum_{i=1}^N (\id-\xi_i) \tilde R_i^T \tilde D_i {\mathbf P}_i)  ) \\
		&\le& 2\,( (S_1 Z_{S_1} {\mathbf P}_0 \,,\, Z_{S_1} {\mathbf P}_0))
		 + \NC \sum_{i=1}^N (S_1 (\id-\xi_i) \tilde R_i^T \tilde D_i {\mathbf P}_i \,,\,  (\id-\xi_i) \tilde R_i^T \tilde D_i {\mathbf P}_i)  ) \\
		 &\le& 2\,( (S_1 Z_{S_1} {\mathbf P}_0 \,,\, Z_{S_1} {\mathbf P}_0))
		 + \NC \tau_{S_1} \sum_{i=1}^N ((\tilde C_i + \tilde B_i\,(R_i A R_i^T)^{-1}\,\tilde B_i^T)  {\mathbf P}_i \,,\, {\mathbf P}_i)  ) \\
		 &\le& 2 \max(1 , \NC \tau_{S_1})\, b({\mathcal P}\,,\,{\mathcal P})  \,,
	\end{array}
\]
so that we can take $c_R=2 \max(1 , \NC \tau_{S_1} )$ and we lose only a factor of 2 compared to eq.~\eqref{eq:S1prec}.

\paragraph{Stable decomposition} 
\label{par:stable_decompositiontwolevel}
Let ${\mathbf P}\in\HO$, we start from its decomposition~\eqref{eq:decompositiontwolevel2} and estimate its $b$-norm
\[
	\begin{array}{rcl}
 b({\mathcal P}\,,\,{\mathcal P}) &=& (S_1 \sum_{i=1}^N \tilde R_i^T \tilde D_i \xi_i \tilde R_i {\mathbf P} , \sum_{i=1}^N \tilde R_i^T \tilde D_i \xi_i \tilde R_i {\mathbf P} ) \\
 & &+ \sum_{i=1}^N (\, (\tilde C_i 
 + \tilde B_i\,(R_i A R_i^T)^{-1}\,\tilde B_i^T)\,  \tilde R_i {\mathbf P} \, ,\, \tilde R_i {\mathbf P} ) \\
 &=& (S_1 \sum_{i=1}^N \tilde R_i^T \tilde D_i \xi_i \tilde R_i {\mathbf P} , \sum_{i=1}^N \tilde R_i^T \tilde D_i \xi_i \tilde R_i {\mathbf P} ) + a({\mathbf P}\,,\,{\mathbf P})  \\
   &\le& (\gamma + 1)\,a({\mathbf P}\,,\,{\mathbf P})
   \,,
	\end{array}
\]		
where 
\[
	\gamma := \max_{{\mathbf P} } \frac{(S_1 \sum_{i=1}^N \tilde R_i^T \tilde D_i \xi_i \tilde R_i {\mathbf P} , \sum_{i=1}^N \tilde R_i^T \tilde D_i \xi_i \tilde R_i {\mathbf P} ) }{(S_1{\mathbf P}\,,\,{\mathbf P})}\,.
\]
We can take $c_T:= 1/(1+\gamma)$ {but we have no estimate on $\gamma$}. 

Finally, the explicit form of this alternative preconditioner reads:
\begin{equation}
	\label{eq:S1preclight}
	\begin{array}{rcl}
  M_{S_1}^{-1} &:=& Z_{S_1}\,(Z_{S_1}^T {S_1} Z_{S_1})^{-1}\,Z_{S_1}^T \\
   & & + \left(\sum_{i=1}^N \tilde R_i^T \tilde D_i\,(\id-\xi_i) (\tilde C_i + \tilde B_i\,(R_i A R_i^T)^{-1}\,\tilde B_i^T)^\dag\, (\id-\xi_i^T) \tilde D_i \tilde R_i \right) \,.		
	\end{array}
\end{equation}


Note that the application of $M_{S_1}^{-1}$ can be done using only sparse solvers since	solving a linear system with a local Schur complement 
\[
	(\tilde C_i + \tilde B_i\,(R_i A R_i^T)^{-1}\,\tilde B_i^T) {\mathbf P}_i = {\mathbf G}_i \,,
\]	
amounts to solving an augmented sparse system which has the form of a local saddle point system:
\begin{equation}
	\label{eq:systemeetendu}
	 - \left(\begin{array}{lr}
		 	R_i A R_i^T & \tilde B_i^T \\ \tilde B_i & - \tilde C_i
		 \end{array}\right)\,
	\left(\begin{array}{c}
		 	{\mathbf U}_i \\ {\mathbf P}_i
		 \end{array}\right)
 =  \left(\begin{array}{c}
	 	{\mathbf 0} \\ {\mathbf G}_i
	 \end{array}\right)\,.
\end{equation}


\subsection{Final Preconditioner for the Schur complement} 
\label{sub:preconditioner_for_m_s}
From the spectrally equivalent preconditioner $M_{S_1}$ to $S_1$, we define $N_S$ a spectrally equivalent preconditioner to $M_S$ and thus to $S$ as well:
\begin{equation}
	\label{eq:NS}
\boxed{	N_S := S_0 + M_{S_1}\,.}
\end{equation}
We now consider the application of the preconditioner $N_S$, that is for some right hand side ${\mathbf G}\in\R^m$, the solving in ${\mathbf P}$ of the following system:
\begin{equation}
	\label{eq:newinner}
		N_S\, {\mathbf P}={\mathbf G}\,,
\end{equation}
by a Krylov solver with $M_{S_1}^{-1}$ as a preconditioner.

\section{Recap} 
\label{sec:recap}
\subsection{Setup for the Schur complement preconditioner} 
\label{sub:setup}

We have a setup phase which is composed of:
\begin{enumerate}
	\item Build the two-level preconditioner $M_A^{-1}$ for $A$, see eq.~\eqref{eq:M2},
	\item Build the two-level preconditioner $M_{S_1}^{-1}$ for $S_1$, see eq.~\eqref{eq:S1prec}.
\end{enumerate}

Once the setup is complete, applying preconditioner $N_S^{-1}$ can be performed following  Algorithm~\ref{alg:NS1preconditioner}
\begin{algorithm}
\caption{$N_S^{-1}$ matvec product}
\label{alg:NS1preconditioner}
\begin{algorithmic}
\STATE 	INPUT: ${\mathbf G}\in \R^m$ \ \ OUTPUT: ${\mathbf P} = N_S^{-1}\,{\mathbf G}$
\STATE 1. Solve eq.~\eqref{eq:newinner} in ${\mathbf P}$ by a Krylov method with $M_S^{-1}$ as preconditioner.
\end{algorithmic} 
\end{algorithm} 

\subsection{DD solver for the saddle point system} 
\label{sub:dd_solver_for_the_saddle_point_system}
\begin{algorithm}
\caption{DD saddle point solver}
\label{alg:ddsaddlepointsolver}
\begin{algorithmic}
\STATE 	INPUT: $\left(\begin{array}{c}
	{\mathbf F_U} \\
	{\mathbf F_P}
	\end{array}\right) \in \R^{n+m}$
	\ \ \ \hfill 
	OUTPUT: $\left(\begin{array}{c}
		{\mathbf U} \\
		{\mathbf P}
	\end{array}\right)$ 
	 the solution to~\eqref{eq:saddlepointsystem}.
\STATE 1. Solve $A {\mathbf G_U} = {\mathbf F_U} $ by a PCG with $M_A^{-1}$ as a preconditioner
\STATE 2. Compute ${\mathbf G_P} := {\mathbf F_P} - B\, {\mathbf G_U}$
\STATE 3. Solve $(C+BA^{-1}B^T) {\mathbf P}= -{\mathbf G_P}$ by a PCG with $N_S^{-1}$ as a preconditioner, see~Algorithm~\ref{alg:NS1preconditioner}
\STATE 4. Compute ${\mathbf G_U} := {\mathbf F_U} - B^T {\mathbf P}$
\STATE 5. Solve $A {\mathbf U}  = {\mathbf G_U}$ by a PCG with $M_A^{-1}$ as a preconditioner
\end{algorithmic} 
\end{algorithm} 

We now consider the solving of the saddle point problem:
\begin{equation}
	\label{eq:saddlepointsystem}
\left(\begin{array}{cr}
	A & B^T\\
	B & -C\ \ 
\end{array}\right)\,\left( \begin{array}{c}
	{\mathbf U}\\
	{\mathbf P}
\end{array} \right)
 = \left( \begin{array}{c}
	{\mathbf F_U}\\
	{\mathbf F_P}
\end{array} \right)\,
\end{equation}
by Algorithm~\ref{alg:ddsaddlepointsolver} whose Step~3 demands the matrix-vector product with matrix $C+BA^{-1}B^T$ which is done by an iterative solve for matrix $A^{-1}$. 
In order to avoid such a nested loop, block solvers have been developed, see \cite{Murphy:2000:NPI,Benzi:2005:NSS,sturler:2005:BDC,Notay:2019:CSI}. Following~\cite{Notay:2019:CSI},   solving with $A^{-1}$ is not  needed when preconditioning the original system~\eqref{eq:saddlepointmatrix} by $P^{-1}$ where $P$ is an inexact block factorization:
\begin{equation}
	\label{eq:inexactblockfacto}
	P:= \left(\begin{array}{cr}
	M_A & \\
	B & -N_S\ \ 
\end{array}\right)\,
\left(\begin{array}{cr}
	\id & M^{-1}_A\,B^T\\
	 & \id\ \ 
\end{array}\right)\,
\end{equation}
or a slightly different but symmetric variant of it introduced in eq.~(2.5) of \cite{Notay:2019:CSI}:
\begin{equation}
	\label{eq:inexactblockfactovariant}
	P:= \left(\begin{array}{cr}
	\id & \\
	B M^{-1}_A & \id \ \ 
\end{array}\right)\,
\left(\begin{array}{cr}
	M_A\, (2\,M_A^{-1}-A)^{-1} \,M_A &  \\
	 & -N_S \ \ 
\end{array}\right)\,
\left(\begin{array}{cr}
	\id &  M^{-1}_A\,B^T \\
	 & \id \ \ 
\end{array}\right)\,.
\end{equation}
 Note that in both cases, the inverses of $P$ are computable easily in our framework.
%
%

%

\section{Numerical experiments} 
\label{sec:num}


In this section, we perform 3D experiments to illustrate the theory and the performance of the method. We are interested in a heterogeneous elasticity problem with nearly incompressible material typically rubber-steel structures. First, we recall the definition of the coefficients and the corresponding variational formulation.

The mechanical properties of a solid are characterized by its Young modulus $E$ and Poisson ratio $\nu$, or alternatively by its Lamé coefficients $\lambda$ and $\mu$. They verify the following relations:

\begin{equation}
	\label{eq:lambdanu}
\lambda = \frac{E \nu}{(1+\nu) (1- 2 \nu)}\ \text{ and } \ \mu = \frac{E}{2(1+\nu)}\,.
\end{equation}

For the discretization, we choose a continuous pressure space and take the lowest order Taylor-Hood finite element $C0P2-C0P1$ whose stability is proved, see e.g., \cite{Brezzi:2012:MHF}. The domain $\Omega$ is a beam and the variational problem consists in finding $(\boldsymbol{u}_{h},p_{h})\in \mathcal{V}_{h}:=\mathbb{P}_{2}^{3}\cap (H^1_0(\Omega))^3\times\mathbb{P}_{1}$ with Dirichlet boundary conditions on the four lateral faces and Neumann boundary conditions on the other two faces such that for all $(\boldsymbol{v}_{h},q_{h})\in \mathcal{V}_{h}$
\begin{equation}\label{eq:elastvarf}
\begin{cases}
	\int_\Omega 2\mu\strainLinear (\boldsymbol{u}_{h}):\strainLinear(\boldsymbol{v}_{h})dx & 
	- \int_\Omega p_{h}\Div(\boldsymbol{v}_{h})dx=
	\int_\Omega\vect{f}\boldsymbol{v}_{h} dx\\\\
	-\int_\Omega\Div(\boldsymbol{u}_{h})q_{h}dx  &- \int_\Omega \frac{1}{\lambda}p_{h}q_{h}=0.
	\end{cases}
\end{equation}
Letting $\boldsymbol{u}$ denote the degrees of freedom of $\boldsymbol{u}_{h}$ and $p$ that of $p_h$, the problem can be written in matrix form as:

\begin{equation}
	\label{eq:elastsystem}
\left(\begin{array}{cr}
	A & B^T\\
	B & -C\ \ 
\end{array}\right)\,\left( \begin{array}{c}
	{\boldsymbol{u}}\\
	{p}
\end{array} \right)
 = \left( \begin{array}{c}
	{\text{f}}\\
	{0}
\end{array} \right)\,.
\end{equation}

The matrix $C$ is a mass matrix arising from the discretization of a variational form. This enables us to satisfy Assumption~\ref{as:Cisasum} with $\tilde C_i$ the corresponding mass matrix but only defined on subdomain $\tilde \Omega_i$ which is the extension by a layer of direct neighbors of subdomain $\Omega_i$.

In the following numerical experiments, we consider a heterogeneous beam composed of 10 alternating layers of rubber material $(E_1,\nu_1)=(\num{1e+7},0.4999)$ and steel material $(E_2,\nu_2)=(\num{2e+9},0.35)$, see Fig.~\ref{fig:beam}.\\

\subsection{Software, hardware, implementation details}

In the following numerical experiments, the iteration counts assess the control of the condition number via the adaptive coarse spaces. We also report timings along with iteration counts. We also showcase the fact that the size of the GenEO coarse space adapts automatically to the difficulty of the problem at hand, for example when going from a homogeneous to a heterogeneous problem. We illustrate the efficiency of the method by performing weak and strong scalability tests, using the automatic graph partitioner \textit{Metis}~\cite{METIS} for the subdomain partitioning.\\

The problem is discretized and solved with the open-source parallel finite element software \textit{FreeFEM}~\cite{Hecht:2012:NDF}. FreeFEM is domain specific language (DSL) where the problem to be solved is defined in terms of its variational formulation. Then the local matrices $(A_i^{Neu})_{1\le i\le N}$ (see eq.~\eqref{eq:AiNeu}) and $(\tilde C_i)_{1\le i\le N}$ (see eq.~\eqref{eq:Ctildei}) are easily obtained by restricting the corresponding variational formulations to adequate local subdomains. Note that these matrices are different from the restriction of the global matrices $A$ and $C$ to the local degrees of freedom. The domain decomposition algorithm presented in this paper is implemented on top of the \textit{ffddm} framework, a set of parallel \textit{FreeFEM} scripts implementing Schwarz domain decomposition methods. \textit{ffddm} already implements the GenEO method~\cite{Spillane:2014:ASC} for SPD problems, and its building blocks are designed to simplify the implementation and prototyping of new domain decomposition methods such as the saddle point solver presented in this paper. The \href{https://doc.freefem.org/documentation/ffddm/index.html}{\textit{ffddm} documentation} is available on the FreeFem.org web page, see~\cite{FFD:Tournier:2020}.\\


Numerical results are obtained on the french GENCI supercomputer \textit{Occigen}, on the Haswell partition composed of 50544 cores of Intel Xeon E5-2690V3 processors clocked at \SI{2.6}{\giga\hertz}. The interconnect is an InfiniBand FDR 14 pruned fat tree. We use Intel compilers, the Intel Math Kernel Library and Intel MPI.

As is usually done in domain decomposition methods, we assign one subdomain per MPI process. Our implementation is pure MPI and no multithreading is done ; we assign one MPI process per computing core. The mesh of the computational domain is partitioned using the automatic graph partitioner \textit{Metis}~\cite{METIS} (see Figure~\ref{fig:beam}). Local subdomain matrices are factorized by the sparse direct solver \textit{MUMPS}~\cite{Amestoy:2001:FAM}. Local eigenvalue problems are solved with \textit{Arpack}~\cite{lehoucq:1998:arpack} ; both libraries are interfaced with \textit{FreeFEM}. GenEO coarse space matrices $R_0 A R_0^T$ in~\eqref{eq:M2} and $Z_{S_1}^T S_1 Z_{S_1}$ in~\eqref{eq:P0tilde} are assembled and factorized in a distributed manner on a few cores (24 in most of the experiments) using the parallel solver \textit{MUMPS}.

\begin{figure}[h!]
\centering
\includegraphics[width=0.49\textwidth]{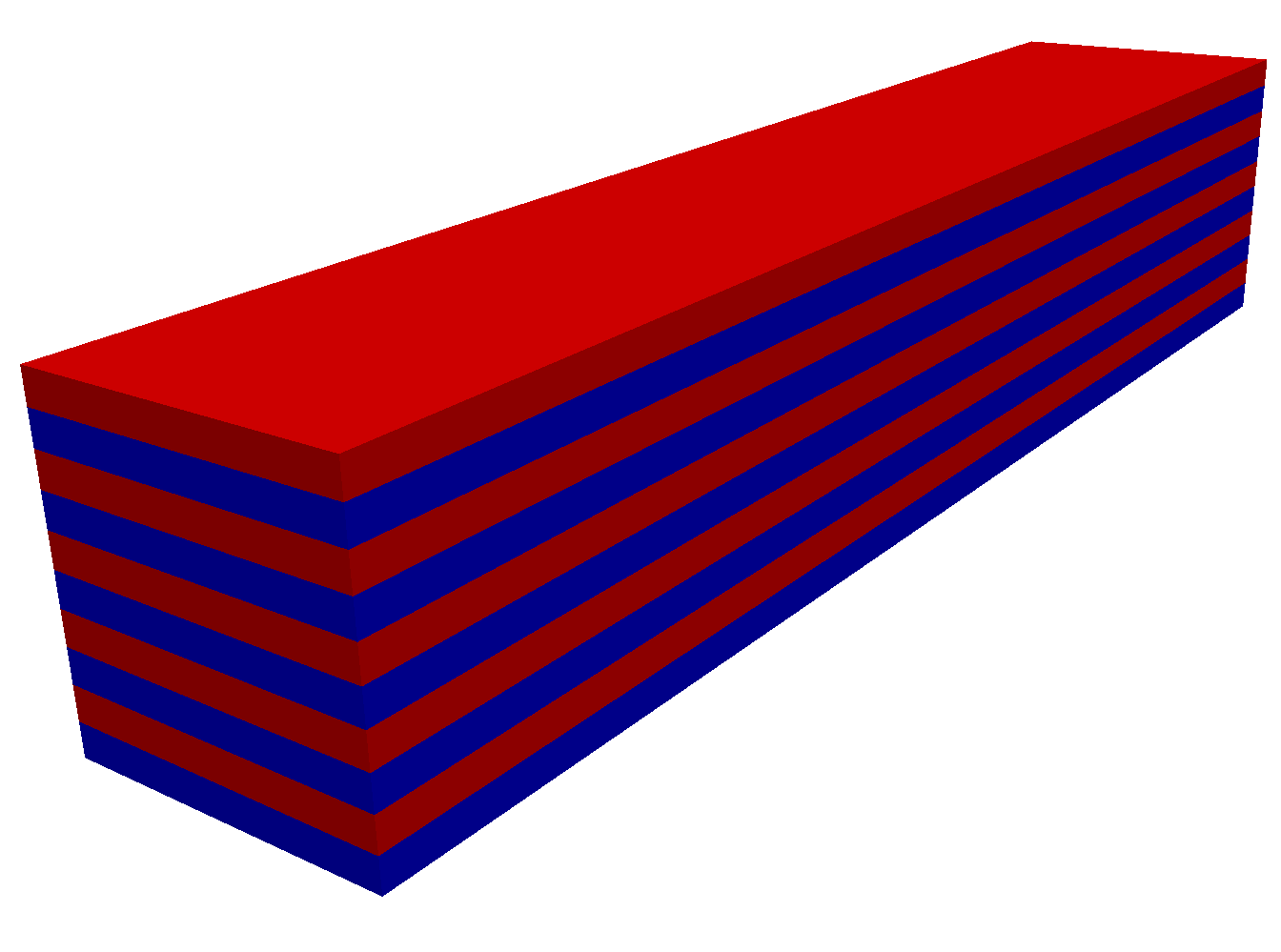}
\includegraphics[width=0.49\textwidth]{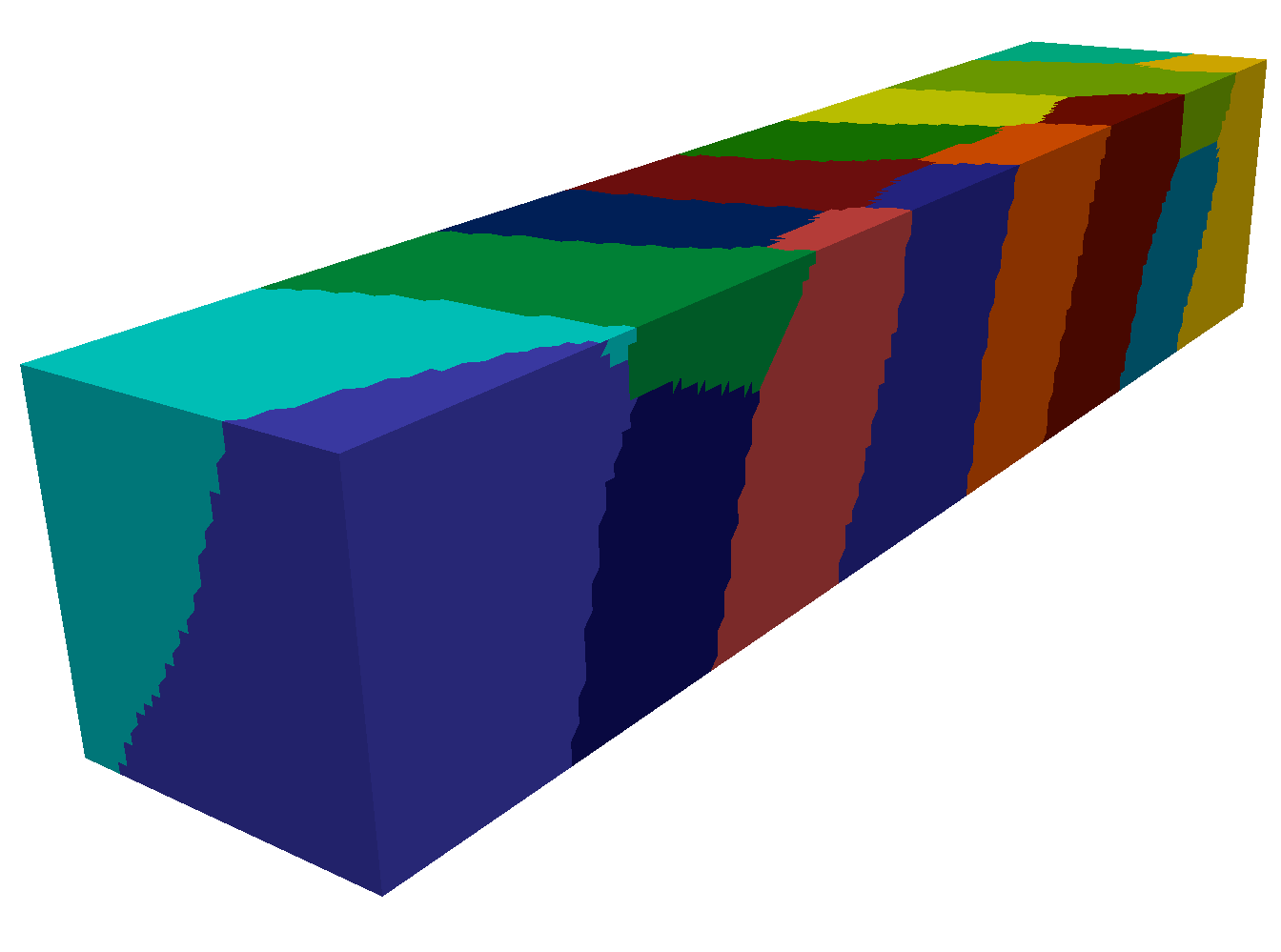}
\caption{Heterogeneous beam composed of 10 alternating layers of rubber and steel. Coefficient distribution (left) and mesh partitioning into 16 subdomains by the automatic graph partitioner \textit{Metis} (right).}
\label{fig:beam}
\end{figure}

For illustration purposes, we represent in Figure~\ref{num:ev} (top) the eigenvalues of the local GenEO eigenvalue problems for both coarse spaces, $V_0$ for $A$ and $\tilde W_0$ for $S_1$ (the former corresponding to eq.~\eqref{eq:geneoA} and the latter to eq.~\eqref{eq:geneoforS1}), for the heterogeneous beam problem with 16 million degrees of freedom, corresponding to the first row of Table~\ref{num:newtable}. The figures show the inverse of the first 40 largest eigenvalues for 10 of the subdomains for the experiment corresponding to the first row of Table~\ref{num:newtable}, so that the eigenvectors corresponding to the smallest values on the graphs (below the dashed line) will be selected to enter the coarse space.  For comparison, we also solve the constant coefficient problem corresponding to an homogeneous steel (and compressible) beam and show the eigenvalues in Figure~\ref{num:ev} (bottom).

\begin{figure}[h!]
\centering
\includegraphics[width=0.49\textwidth,trim= 2cm 2cm 1cm 2cm]{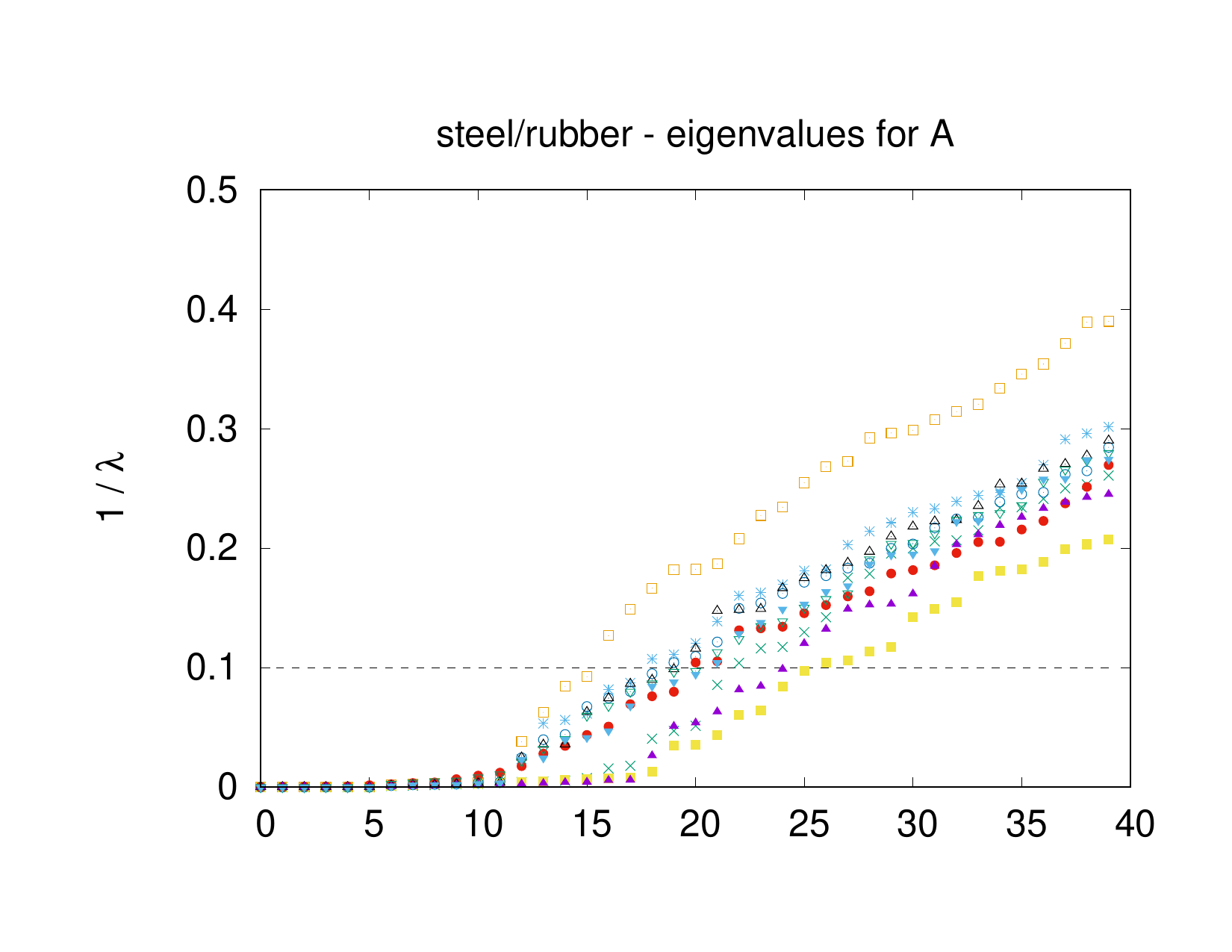}
\includegraphics[width=0.49\textwidth,trim= 2cm 2cm 1cm 2cm]{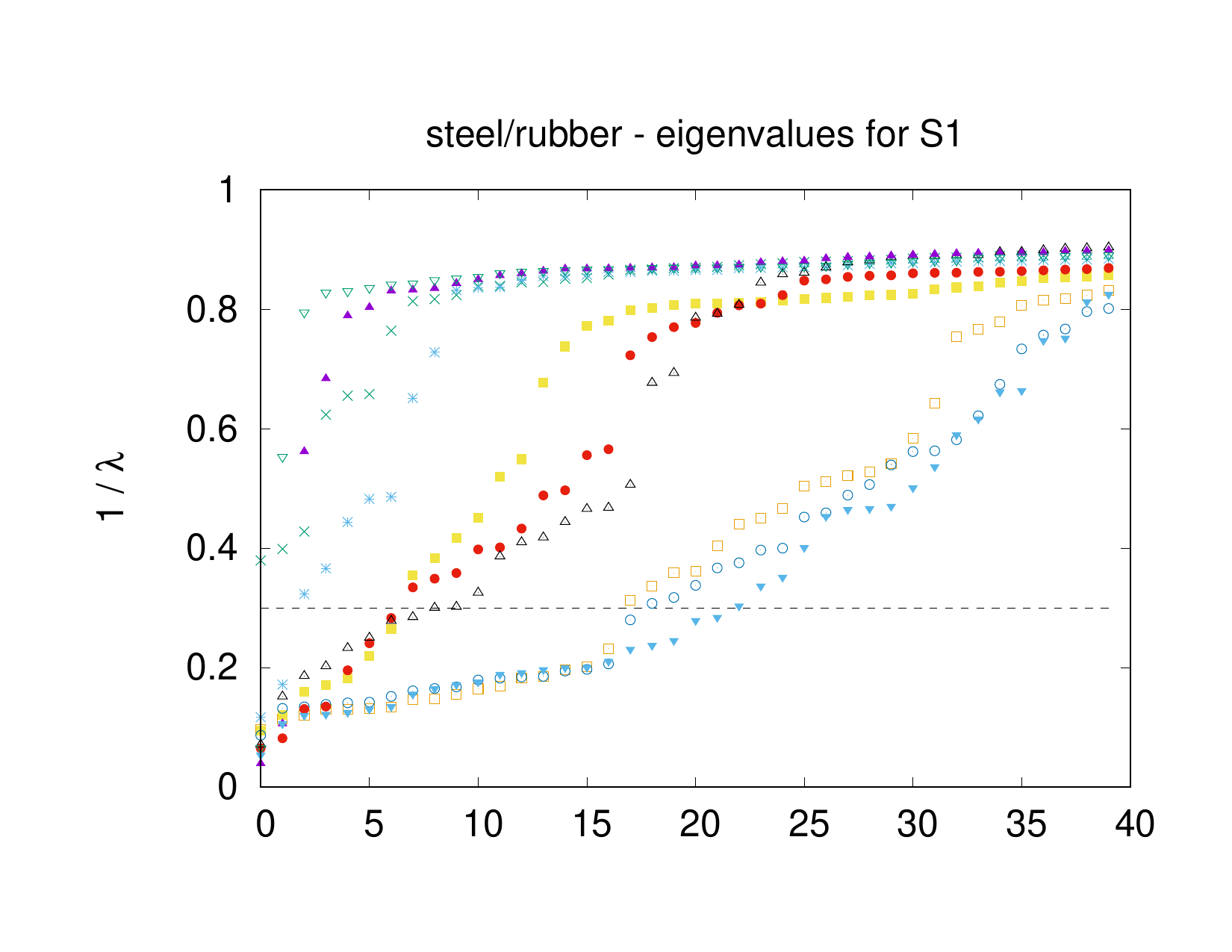}
\includegraphics[width=0.49\textwidth,trim= 2cm 2cm 1cm 2cm]{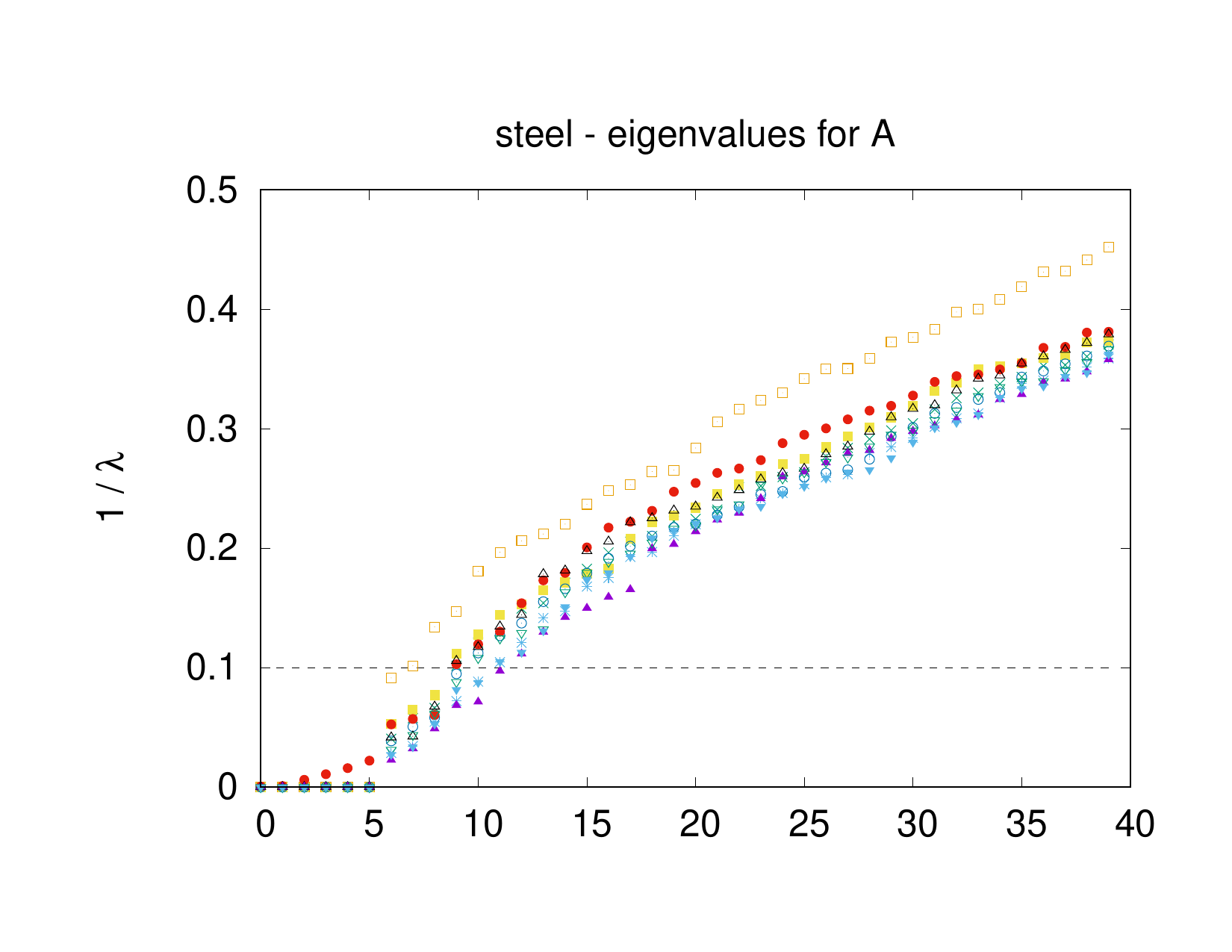}
\includegraphics[width=0.49\textwidth,trim= 2cm 2cm 1cm 2cm]{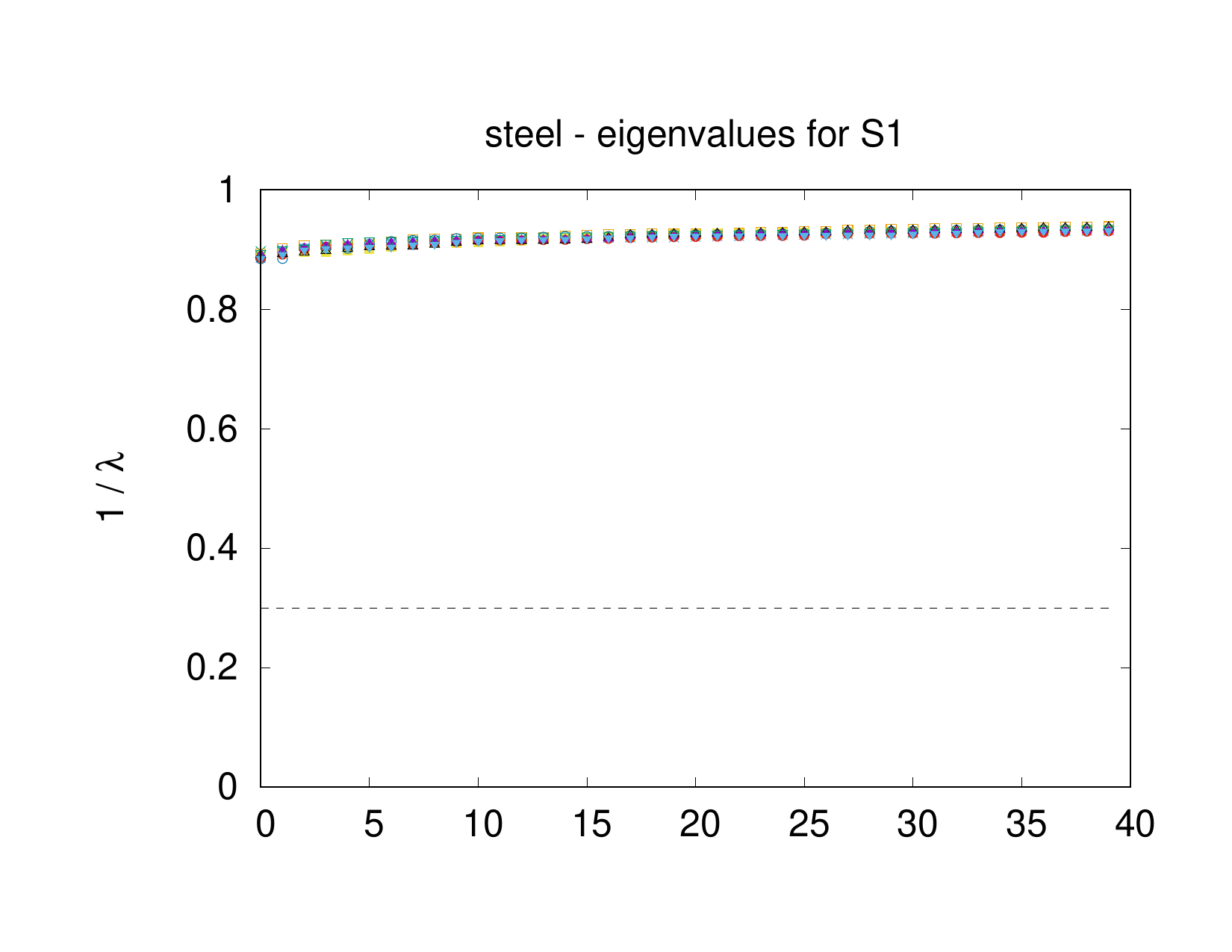}
\caption{Top: Heterogeneous steel and rubber beam. Bottom: Homogeneous steel beam. Inverse of the eigenvalues of the local GenEO eigenvalue problems for both coarse spaces, $V_0$ for $A$ (left) and $\tilde W_0$ for $S_1$ (right), for 10 of the subdomains.}
\label{num:ev}
\end{figure}


We can see the effect of heterogeneities on the spread of eigenvalues for different subdomains compared to the homogeneous case. In addition, we retrieve the 6 eigenvalues corresponding to the rigid body modes for $A$ in the homogeneous case, and we see that we need a larger set of eigenvectors in order to build a robust coarse space in the heterogeneous case. Figure~\ref{num:ev} also shows that there is no need for a coarse space for $S_1$ in the compressible homogeneous case. A strong feature of the GenEO method is that relevant eigenvectors to enter the coarse space are selected automatically, adapting to the problem at hand and its spatial heterogeneity. Moreover, the robustness of the model does not rely on a specific partitioning, which allows the use of automatic graph partitioners such as \textit{Metis}~\cite{METIS} or \textit{Scotch}~\cite{webscotch}.

\subsection{Parameters of the method}

The method has a few parameters in play:

\begin{itemize}
\item The number of layers of mesh elements in the overlap region between subdomains is 2 for the velocity blocks (corresponding to $R_i$ in~\eqref{eq:M2}) and 4 for the pressure blocks (corresponding to $\tilde R_i$ in~\eqref{eq:MS}). This corresponds to the minimum overlap that satisfies relation~\eqref{eq:Ritilde} for a symmetric construction of the overlap between subdomains.
\item For the heterogeneous beam problem, we set the threshold $\tau_A$ for selecting the local eigenvectors entering the coarse space $V_0$ to 10 (corresponding to $0.1$ on Figure~\ref{num:ev}, left). The threshold  $\tau_{S_1}$ for selecting the local eigenvectors entering the coarse space $\tilde W_0$ is set to $3.33$ (corresponding to $0.3$ on Figure~\ref{num:ev}, right). 

The selection of these thresholds is based on the fact that the iteration counts are very stable in weak and strong scaling experiments. It means that for a given physics the parameters  $\tau_A$ and $\tau_{S_1}$ can be tuned for a small test case and then used in large scale experiments. 


\end{itemize}

\subsection{Weak scalability test for heterogeneous steel and rubber beam}

Here we present weak scalability results for the heterogeneous beam composed of 10 alternating layers of rubber $(E_1,\nu_1)=(\num{1e+7},0.4999)$ and steel $(E_2,\nu_2)=(\num{2e+9},0.35)$. Local problem size is kept roughly constant as $N$ grows, and the total number of dofs $n$ goes from $16$ million on $262$ cores to $1$ billion on $16800$ cores. We use the original definition of $M_{S_1}^{-1}$~\eqref{eq:S1prec}.

We report in Table~\ref{num:newtable} the iteration counts and computing times for the DD saddle point solver Algorithm~\ref{alg:ddsaddlepointsolver}. Note that in Algorithm~\ref{alg:ddsaddlepointsolver} we replace PCG by right-preconditioned GMRES for step 3. The stopping criterion is a tolerance smaller than $10^{-5}$. Moreover, we use flexible GMRES, as we solve~\eqref{eq:newinner} inexactly using GMRES with a tolerance of $10^{-2}$ in order to apply $N_S^{-1}$.

In order, columns correspond to: number of cores, number of dofs $n$, size of the coarse space for A $dim(V_0)$, size of the coarse space for $S_1$ $dim(\tilde W_0)$, setup time corresponding to the assembly and factorization of the various local and coarse operators, number of outer GMRES iterations, GMRES computing time, total computing time (setup + GMRES) and average number of inner GMRES iterations for each solution of~\eqref{eq:newinner}. All timings are reported in seconds.

\begin{table}[h!]
\small
\newtablefromcsv{}{nolump-newweak.csv}
\caption{Weak scaling experiment for 3D heterogeneous elasticity: beam with 10 alternating layers of steel and rubber. Reported iteration counts and timings for DD saddle point algorithm~\ref{alg:ddsaddlepointsolver}.}
\label{num:newtable}
\end{table}

\paragraph{Iteration counts.} We first discuss iteration counts. We see that outer iteration count remains stable, between 21 and 32. The inner iteration count is also stable and remains around 11. We also observed (figures are not reported here) than the inner GMRES tolerance of $10^{-2}$ does not affect the outer iteration count compared to an accurate solution with a stricter tolerance of $10^{-5}$, and allows a significant reduction in inner iteration count. For example, 11 iterations on average instead of 28 on 1050 cores for the same outer iteration count of 22, leading to a decrease from 1178.2 to 665.9 seconds in GMRES timing. 

\paragraph{Timings.} In terms of setup timings, the computing time remains relatively stable, with roughly $15\%$ increase for a factor of 64 in problem size. Around $60\%$ of the setup time is spent in the solution of the  eigenvalue problems~\eqref{eq:geneoforS1} for $S_1$.

The solution time stays relatively stable up to 4200 cores, where it starts to degrade. This can be related to the increased cost of the coarse space solves with matrices $R_0 A R_0^T$ and $Z_{S_1}^T S_1 Z_{S_1}$ as their size increases: total time spent in coarse space solves is 14.7, 62.1 and 679.3 seconds on 262, 4200 and 16800 cores respectively. A possible improvement would be to use a multi-level method to solve the coarse problems iteratively.


\subsection{Strong scalability test for heterogeneous steel and rubber beam}

Strong scalability results for the heterogeneous beam composed of 10 alternating layers of rubber and steel are presented in Table~\ref{num:newtablestrong}. The problem size is $27.5$ million and the strong scaling test ranges from 525 to 4200 cores. Iteration counts and computing times for the DD saddle point solver Algorithm~\ref{alg:ddsaddlepointsolver} are reported.

\begin{table}[h!]
\small
\newtablefromcsv{}{nolump-newstrong.csv}
\caption{Strong scaling experiment for 3D heterogeneous elasticity: beam with 10 alternating layers of steel and rubber. Reported iteration counts and timings for DD saddle point Algorithm~\ref{alg:ddsaddlepointsolver}.}
\label{num:newtablestrong}
\end{table}

\paragraph{Iteration counts.} Outer iteration count remains stable, with a slight increase from 21 to 23. Inner iteration counts remains also stable, even slightly decreasing from 12 to 9.

\paragraph{Timings.} We see that the setup timing decreases accordingly as the subdomains shrink in size, from $526.6$ seconds on 525 cores to $103.8$ seconds on 4200 cores; the speedup efficiency with respect to 525 cores ranges from $99\%$ on 1050 cores to $63\%$ on 4200 cores. 

We see a similar trend for the solution time, ranging from $519.5$ seconds on 525 cores to $91.3$ seconds on 4200 cores ; the speedup efficiency with respect to 525 cores ranges from $116\%$ on 1050 cores to $71\%$ on 4200 cores. This decrease in efficiency can be explained by the increased relative cost of coarse space solves as subdomains get smaller: from $3\%$ on 525 cores to $30\%$ on 4200 cores. The added overlap also plays a greater role in the loss of efficiency as subdomains get smaller.\\

\begin{table}[ht!]
	\begin{center}
	\begin{tabular}{|S[table-format=7.0]|S[table-format=4.0]|S[table-format=4.1]|S[table-format=1.1]|S[table-format=4.1]|S[table-format=3.1]|S[table-format=2.0]|S[table-format=3.1]|S[table-format=3.1]|}
	\cline{3-9}
	\multicolumn{2}{c}{} & \multicolumn{3}{|c|}{MUMPS} & \multicolumn{4}{c|}{DD saddle point solver} \\ \hline
	{$n$} & {\#cores} & {setup(s)} & {solve(s)} & {total(s)} & {setup(s)} & {\#It} & {gmres(s)} & {total(s)} \\ \hline
	 139809 &    16 &     7.1  &     0.1  &      7.2 &     27.1 &   18 &    19.7  &    46.8  \\ \hline
	 1058312 &    32 &    85.7  &     0.8  &     86.5 &    166.2 &   20 &   137.2  &   303.4  \\
	 1058312 &    65 &    71.0  &     0.6  &     71.6 &     91.0 &   21 &    77.1  &   168.1  \\
	 1058312 &   131 &    63.2  &     0.5  &     63.7 &     59.7 &   24 &    49.7  &   109.4  \\ \hline
	 3505582 &    55 &   477.8  &     3.7  &    481.5 &    404.1 &   24 &   430.1  &   834.2  \\
	 3505582 &   110 &   392.3  &     2.3  &    394.6 &    242.5 &   23 &   212.8  &   455.3  \\
	 3505582 &   221 &   387.0  &     2.1  &    389.1 &    134.8 &   23 &   109.4  &   244.2  \\
	 3505582 &   442 &   453.9  &     2.2  &    456.1 &     88.2 &   24 &    68.6  &   156.8  \\ \hline
	8235197 &   262 &     OOM  &       /  &        / &    278.5 &   25 &   264.3  &   542.8  \\
	8235197 &   525 &  1622.1  &     6.1  &   1628.2 &    172.1 &   24 &   136.0  &   308.1  \\
	8235197 &  1050 &  1994.3  &     7.4  &   2001.7 &    136.5 &   25 &    99.7  &   236.2  \\ \hline
	\end{tabular}
	\end{center}
	\caption{Comparison with the parallel sparse direct solver \textit{MUMPS} for 3D heterogeneous elasticity: beam with 10 alternating layers of steel and rubber. Reported timings for four discretization levels while also varying the number of cores (OOM means the computation ran out of available memory).}
	\label{num:cmpmumps}
\end{table}

In Table~\ref{num:cmpmumps}, we compare the performance of the solver to the parallel sparse direct solver \textit{MUMPS} for the heterogeneous steel and rubber beam test case with four discretization levels, while also varying the number of cores. As we can see, \textit{MUMPS} is comparatively more efficient for smaller problems, with for example a total time of 86.5 seconds compared to 303.4 seconds for our saddle point solver for 1 million unknowns on 32 cores. However, as expected, we see a large increase in memory and computational cost as the size of the system gets larger: for 8.2 million unknowns, \textit{MUMPS} runs out of memory on 262 cores and solves the problem in 1628.2 seconds on 525 cores, compared to 308.1 seconds for the DD solver. Moreover, we can see from Table~\ref{num:cmpmumps} that the DD saddle point solver offers much better strong scalability.\\

In Figure~\ref{num:gmres}, we plot the convergence history of GMRES for both compressible homogeneous and heterogeneous steel-rubber cases with $27.5$ million unknowns on 525 cores (left), with a comparison to the standard one-level Additive Schwarz Method (ASM) from PETSc~\cite{petsc-efficient} on the global problem (right), illustrating the difficulty of the test case at hand. Our saddle point solver needs 15 and 21 iterations to converge for the homogeneous and heterogeneous cases respectively, compared to 856 and 2880 for the one-level method, with significant undesired plateaux. For the same physical test case, iteration counts are better that in \cite{haferssas:2017:additive} but timings are not as good. As mentioned in the introduction, our method has the advantage of a provable convergence estimate and to not depend on the design of specific absorbing conditions for the elasticity system.\\

\begin{figure}[h!]
\centering
\includegraphics[height=6.3cm,trim= 6cm 2cm 5.5cm 2cm]{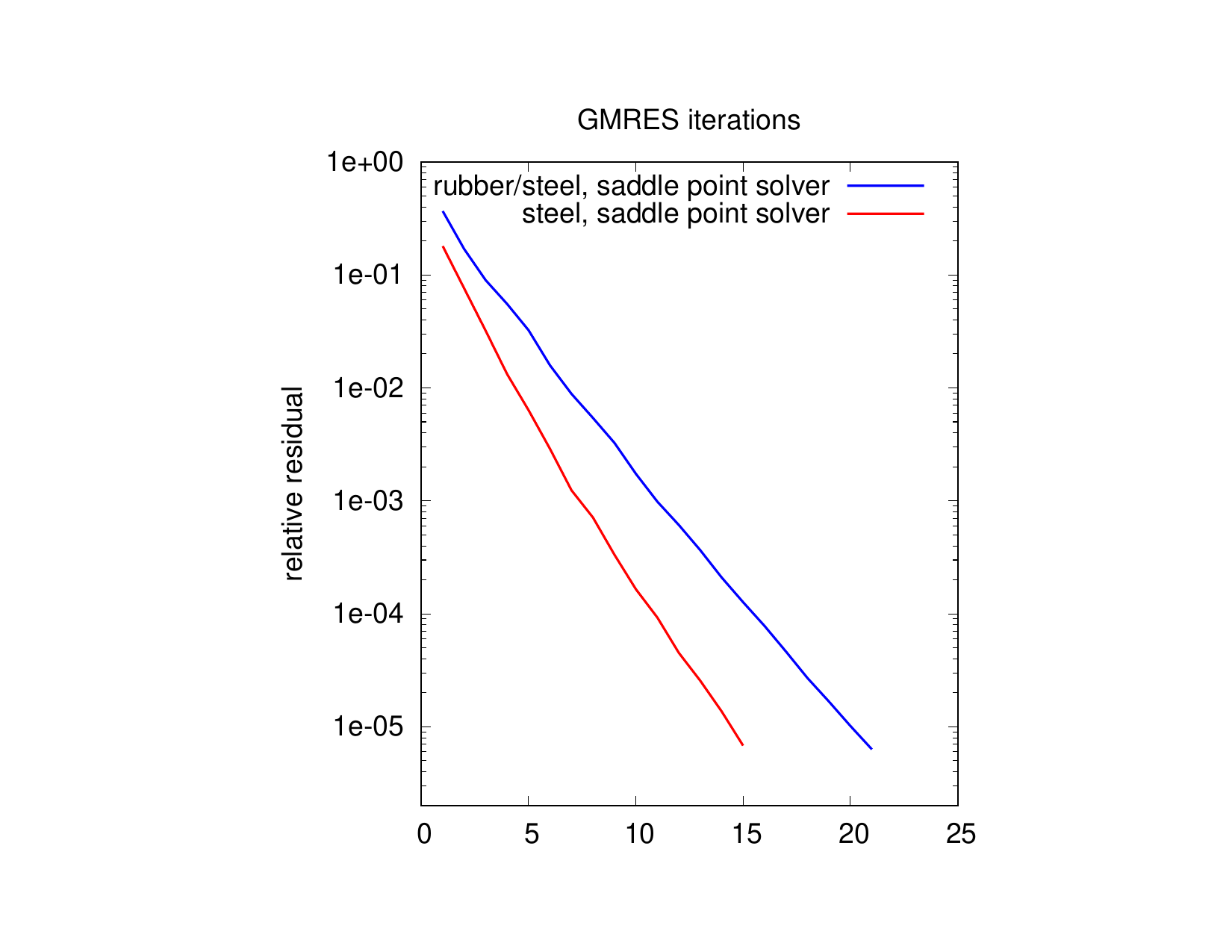}
\includegraphics[height=6.3cm,trim= 2cm 2cm 1cm 2cm]{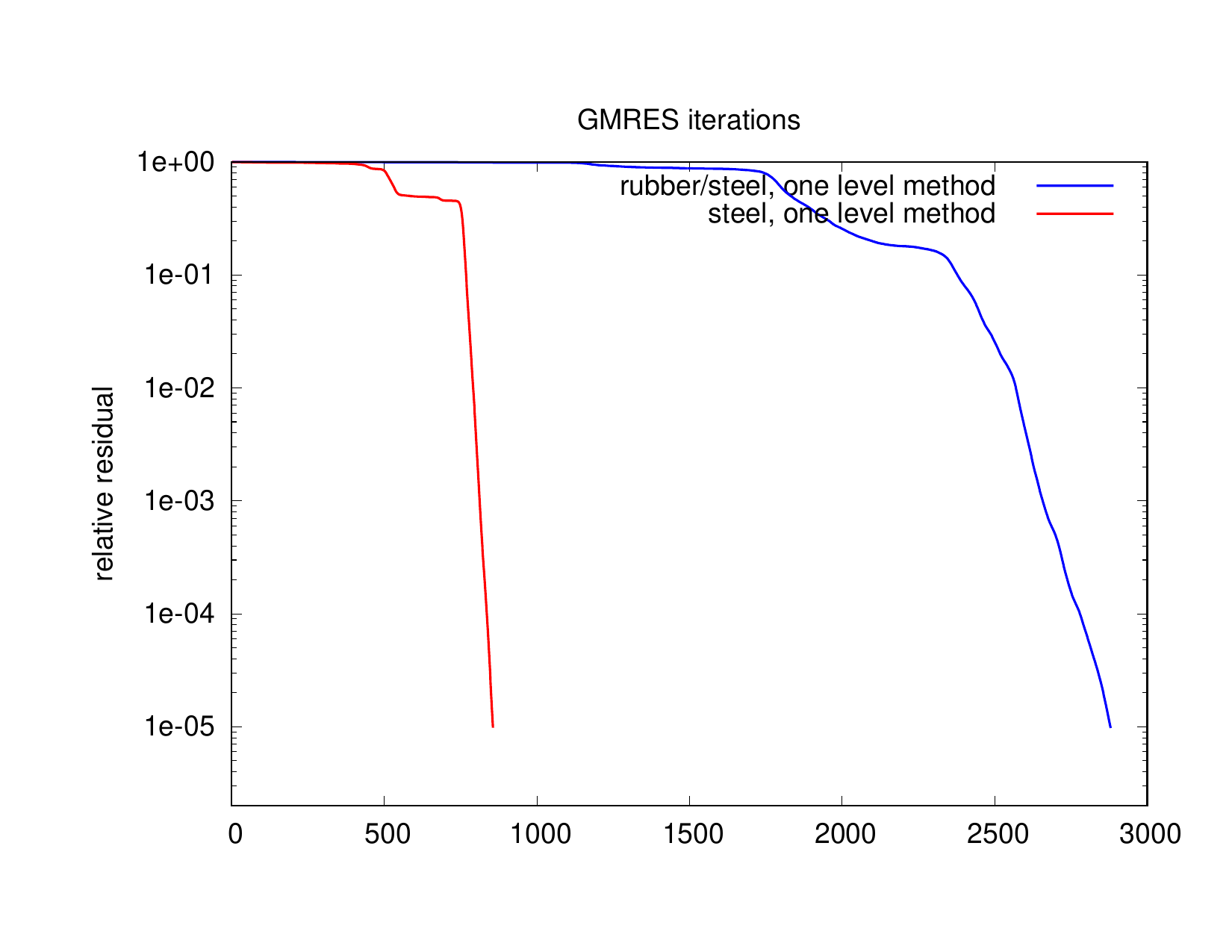}
\caption{GMRES convergence history of the saddle point solver (left) compared to the one-level Additive Schwarz Method (right) for the homogeneous steel beam and heterogeneous rubber/steel beam problems discretized with $27.5$ million unknowns (corresponding to the first row of Table~\ref{num:newtablestrong}), on 525 cores.}
\label{num:gmres}
\end{figure}

We also performed comparisons with the Geometric Algebraic Multigrid (GAMG) preconditioner from PETSc. We were not able to find a suitable tuning of parameters for GAMG for the saddle point formulation. However, we performed comparisons between GAMG and standard GenEO for the velocity formulation on the homogeneous beam, varying the Poisson ratio $\nu$ from 0.48 to 0.499. The GenEO threshold $\tau$ is set to $3.33$, and we select at most 80 eigenvectors in each subdomain. Even though GAMG is faster for $\nu \leq 0.49$, we can see that GenEO is more robust as $\nu$ increases. In particular, GAMG fails to converge in 2000 iterations for $\nu \geq 0.495$.

\begin{table}[ht!]
	\begin{center}
	\begin{tabular}{|S[table-format=1.3]|S[table-format=5.0]|S[table-format=2.1]|S[table-format=5.0]|S[table-format=2.1]|S[table-format=2.0]|S[table-format=2.1]|S[table-format=2.1]|}
	\cline{2-8}
	\multicolumn{1}{c}{131 cores} & \multicolumn{2}{|c|}{GAMG} & \multicolumn{5}{c|}{DD saddle point solver} \\ \hline
	{$\nu$} & {\#It} & {total(s)} & {$dim(V_0)$} & {setup(s)} & {\#It} & {gmres(s)} & {total(s)} \\ \hline
	0.48 & 60 & 67.1 & 10480 & 200.7 & 24 & 11.2 & 212.0 \\ \hline
	0.485 & 109 & 89.0 & 10480 & 199.5 & 27 & 12.7 & 212.2 \\ \hline
	0.49 & 210 & 137.0 & 10480 & 202.0 & 32 & 15.0 & 217.0 \\ \hline
	0.495 & {>2000} & {/} & 10480 & 199.9 & 43 & 20.2 & 220.1 \\ \hline
	0.499 & {>2000} & {/} & 10480 & 199.2 & 99 & 48.6 & 247.7 \\ \hline
	\end{tabular}

	\begin{tabular}{|S[table-format=1.3]|S[table-format=5.0]|S[table-format=2.1]|S[table-format=5.0]|S[table-format=2.1]|S[table-format=2.0]|S[table-format=2.1]|S[table-format=2.1]|}
	\cline{2-8}
	\multicolumn{1}{c}{525 cores} & \multicolumn{2}{|c|}{GAMG} & \multicolumn{5}{c|}{DD saddle point solver} \\ \hline
	{$\nu$} & {\#It} & {total(s)} & {$dim(V_0)$} & {setup(s)} & {\#It} & {gmres(s)} & {total(s)} \\ \hline
	0.48 & 56 & 25.5 & 41766 & 60.4 & 18 & 5.0 & 65.4 \\ \hline
	0.485 & 60 & 26.1 & 41984 & 60.9 & 20 & 5.3 & 66.2 \\ \hline
	0.49 & 116 & 33.3 & 42000 & 60.4 & 23 & 5.9 & 66.3 \\ \hline
	0.495 & {>2000} & {/} & 42000 & 60.4 & 32 & 7.6 & 68.1 \\ \hline
	0.499 & {>2000} & {/} & 42000 & 60.6 & 95 & 20.3 & 81.0 \\ \hline
	\end{tabular}
	\end{center}
	\caption{GAMG versus standard GenEO for the velocity formulation on the homogeneous beam discretized with 7.9 million unknowns, using 131 and 525 cores. Reported iteration counts and timings for different values of the Poisson ratio $\nu$ ranging from 0.48 to 0.499.}
	\label{num:cmpgamg}
\end{table}

\section{Conclusion and outlook} 
\label{sec:conclusion}

Under the assumption that the diagonal block matrices of a saddle point problem  are spectrally equivalent to a sum of positive semi definite matrices, we have introduced an adaptive domain decomposition (DD) method. For this, two coarse spaces are built by solving generalized eigenvalue problems, one for the primal unknowns and the second one for the dual unknowns. The robustness of the method was assessed on a notoriously difficult three dimensional elasticity problem for a steel-rubber structure discretized  with continuous pressure.

Several issues deserve further investigations. First a multilevel method with more than two levels would enable even larger and possibly faster simulations. Also, the tests were performed with FreeFem scripts using the standalone \textit{ffddm}~\cite{FFD:Tournier:2020} framework. The integration of the method in the C++/MPI library \textit{hpddm}~\cite{Jolivet:2014:HPD} could lead to faster codes and a more general diffusion of the saddle point preconditioner. In a different setting, the design of adaptive coarse space is strongly connected to multiscale finite element (MFE) methods (see~e.g., \cite{efendiev2013generalized,tchelepi2005adaptive,Scheichl2021novel} and references therein) and this work could be used in designing MFE methods for saddle point problem.    


\section*{Acknowledgment}

This work was granted access to the HPC resources of OCCIGEN@CINES under the allocation  2020-067730 granted by GENCI.

\section*{Code reproducibility}
Our numerical results can be reproduced running the script \url{https://github.com/FreeFem/FreeFem-sources/blob/master/examples/ffddm/elasticity_saddlepoint.edp} available in the FreeFem distribution starting from version $4.10$.

	\bibliographystyle{plain}
	\bibliography{../../elasticite/NotesdeCours/Poly/bookddm}

\begin{thebibliography}{10}

\bibitem{Amestoy:2001:FAM}
Patrick~R. Amestoy, Iain~S. Duff, Jean-Yves L'Excellent, and Jacko Koster.
\newblock A fully asynchronous multifrontal solver using distributed dynamic
  scheduling.
\newblock {\em SIAM J. Matrix Analysis and Applications}, 23(1):15--41, 2001.

\bibitem{Arnold:2000:MHH}
Douglas~N. Arnold, Richard~S. Falk, and Ragnar Winther.
\newblock Multigrid in h (div) and h (curl).
\newblock {\em Numerische Mathematik}, 85(2):197--217, Apr 2000.

\bibitem{petsc-efficient}
Satish Balay, William~D. Gropp, Lois~Curfman McInnes, and Barry~F. Smith.
\newblock Efficient management of parallelism in object oriented numerical
  software libraries.
\newblock In E.~Arge, A.~M. Bruaset, and H.~P. Langtangen, editors, {\em Modern
  Software Tools in Scientific Computing}, pages 163--202. Birkh{\"{a}}user
  Press, 1997.

\bibitem{Benzi:2005:NSS}
Michele Benzi, Gene~H. Golub, and J{\"o}rg Liesen.
\newblock Numerical solution of saddle point problems.
\newblock {\em Acta Numer.}, 14:1--137, 2005.

\bibitem{Benzi:2008:SPT}
Michele Benzi and Andrew~J. Wathen.
\newblock Some preconditioning techniques for saddle point problems.
\newblock In {\em Model order reduction: theory, research aspects and
  applications}, volume~13 of {\em Math. Ind.}, pages 195--211. Springer,
  Berlin, 2008.

\bibitem{Brezzi:2012:MHF}
Franco Brezzi and Michel Fortin.
\newblock {\em Mixed and hybrid finite element methods}, volume~15.
\newblock Springer Science \& Business Media, 2012.

\bibitem{Cahouet:1988:F3F}
J~Cahouet and J-P Chabard.
\newblock Some fast 3d finite element solvers for the generalized stokes
  problem.
\newblock {\em International Journal for Numerical Methods in Fluids},
  8(8):869--895, 1988.

\bibitem{webscotch}
C.~Chevalier and F.~Pellegrini.
\newblock {PT-SCOTCH}: a tool for efficient parallel graph ordering.
\newblock {\em Parallel Computing}, {6-8}(34):318--331, 2008.

\bibitem{sturler:2005:BDC}
Eric de~Sturler and J\"{o}rg Liesen.
\newblock Block-diagonal and constraint preconditioners for nonsymmetric
  indefinite linear systems. {I}. {T}heory.
\newblock {\em SIAM J. Sci. Comput.}, 26(5):1598--1619, 2005.

\bibitem{Dolean:2015:IDDSiam}
Victorita Dolean, Pierre Jolivet, and Fr\'ed\'eric Nataf.
\newblock {\em An Introduction to Domain Decomposition Methods: algorithms,
  theory and parallel implementation}.
\newblock SIAM, 2015.

\bibitem{Drzisga:2018:analysis}
Daniel Drzisga, Lorenz John, Ulrich Rude, Barbara Wohlmuth, and Walter
  Zulehner.
\newblock On the analysis of block smoothers for saddle point problems.
\newblock {\em SIAM Journal on Matrix Analysis and Applications},
  39(2):932--960, 2018.

\bibitem{efendiev2013generalized}
Yalchin Efendiev, Juan Galvis, and Thomas~Y Hou.
\newblock Generalized multiscale finite element methods (gmsfem).
\newblock {\em Journal of Computational Physics}, 251:116--135, 2013.

\bibitem{Farhat:1991:MFE}
Charbel Farhat and Francois-Xavier Roux.
\newblock A method of {F}inite {E}lement {T}earing and {I}nterconnecting and
  its parallel solution algorithm.
\newblock {\em Int.\ J.\ Numer.\ Meth.\ Engrg.}, 32:1205--1227, 1991.

\bibitem{Farrell:2019:ALP}
Patrick~E. Farrell, Lawrence Mitchell, and Florian Wechsung.
\newblock An {A}ugmented {L}agrangian {P}reconditioner for the 3{D}
  {S}tationary {I}ncompressible {N}avier--{S}tokes {E}quations at {H}igh
  {R}eynolds {N}umber.
\newblock {\em SIAM J. Sci. Comput.}, 41(5):A3073--A3096, 2019.

\bibitem{Griebel:1995:ATA}
M.~Griebel and P.~Oswald.
\newblock On the abstract theory of additive and multiplicative {S}chwarz
  algorithms.
\newblock {\em Numer. Math.}, 70(2):163--180, 1995.

\bibitem{haferssas:2017:additive}
Ryadh Haferssas, Pierre Jolivet, and Fr{\'e}d{\'e}ric Nataf.
\newblock An additive schwarz method type theory for lions's algorithm and a
  symmetrized optimized restricted additive schwarz method.
\newblock {\em SIAM Journal on Scientific Computing}, 39(4):A1345--A1365, 2017.

\bibitem{Hecht:2012:NDF}
F.~Hecht.
\newblock New development in {F}reefem++.
\newblock {\em J. Numer. Math.}, 20(3-4):251--265, 2012.

\bibitem{Hiptmair:1998:MMM}
R.~Hiptmair.
\newblock Multigrid method for {M}axwell's equations.
\newblock {\em SIAM J. Numer. Anal.}, 36(1):204--225, 1998.

\bibitem{Hiptmair:1997:MMH}
Ralf Hiptmair.
\newblock Multigrid method for h (div) in three dimensions.
\newblock {\em Electron. Trans. Numer. Anal}, 6(1):133--152, 1997.

\bibitem{tchelepi2005adaptive}
Patrick Jenny, Seong~H Lee, and Hamdi~A Tchelepi.
\newblock Adaptive multiscale finite-volume method for multiphase flow and
  transport in porous media.
\newblock {\em Multiscale Modeling \& Simulation}, 3(1):50--64, 2005.

\bibitem{Jolivet:2014:HPD}
Pierre Jolivet and Fr\'ed\'eric Nataf.
\newblock Hpddm: {High-Performance Unified framework for Domain Decomposition
  methods, MPI-C++ library}.
\newblock {https://github.com/hpddm/hpddm}, 2014.

\bibitem{METIS}
G.~Karypis and V.~Kumar.
\newblock {METIS}: {A} software package for partitioning unstructured graphs,
  partitioning meshes, and computing fill-reducing orderings of sparse
  matrices.
\newblock Technical report, Department of Computer Science, University of
  Minnesota, 1998.
\newblock http://glaros.dtc.umn.edu/gkhome/views/metis.

\bibitem{Klawonn:1998:OPC}
Axel Klawonn.
\newblock An optimal preconditioner for a class of saddle point problems with a
  penalty term.
\newblock {\em SIAM Journal on Scientific Computing}, 19(2):540--552, 1998.

\bibitem{lehoucq:1998:arpack}
Richard~B Lehoucq, Danny~C Sorensen, and Chao Yang.
\newblock {\em ARPACK users' guide: solution of large-scale eigenvalue problems
  with implicitly restarted Arnoldi methods}, volume~6.
\newblock SIAM, 1998.

\bibitem{Scheichl2021novel}
Chupeng Ma, Robert Scheichl, and Tim Dodwell.
\newblock Novel design and analysis of generalized fe methods based on locally
  optimal spectral approximations.
\newblock {\em arXiv preprint arXiv:2103.09545}, 2021.

\bibitem{Murphy:2000:NPI}
Malcolm~F. Murphy, Gene~H. Golub, and Andrew~J. Wathen.
\newblock A note on preconditioning for indefinite linear systems.
\newblock {\em SIAM J. Sci. Comput.}, 21(6):1969--1972, 2000.

\bibitem{Nepomnyaschikh:1991:MTT}
Sergey~V. Nepomnyaschikh.
\newblock Mesh theorems of traces, normalizations of function traces and their
  inversions.
\newblock {\em Sov. J. Numer. Anal. Math. Modeling}, 6:1--25, 1991.

\bibitem{Pasciak:2002:OSM}
Joseph~E Pasciak and Jun Zhao.
\newblock Overlapping schwarz methods in h (curl) on polyhedral domains.
\newblock {\em Journal of Numerical Mathematics}, 10(3):221--234, 2002.

\bibitem{Pavarino:2002:BNN}
Luca~F. Pavarino and Olof~B. Widlund.
\newblock Balancing {N}eumann-{N}eumann methods for incompressible {S}tokes
  equations.
\newblock {\em Comm. Pure Appl. Math.}, 55(3):302--335, 2002.

\bibitem{Rees:2020:EBP}
T.~Rees and M.~Wathen.
\newblock An element-based preconditioner for mixed finite element problems.
\newblock {\em SIAM Journal on Scientific Computing}, 2020.

\bibitem{Reitzinger:2002:AMM}
S.~Reitzinger and J.~Sch{\"o}berl.
\newblock An algebraic multigrid method for finite element discretizations with
  edge elements.
\newblock {\em Numerical Linear Algebra with Applications}, 9(3):223--238,
  2002.

\bibitem{Spillane:2014:ASC}
Nicole Spillane, Victorita Dolean, Patrice Hauret, Fr\'ed\'eric Nataf, Clemens
  Pechstein, and Robert Scheichl.
\newblock Abstract robust coarse spaces for systems of {PDE}s via generalized
  eigenproblems in the overlaps.
\newblock {\em Numer. Math.}, 126(4):741--770, 2014.

\bibitem{Spillane:2013:ASC}
Nicole Spillane and Daniel Rixen.
\newblock Automatic spectral coarse spaces for robust finite element tearing
  and interconnecting and balanced domain decomposition algorithms.
\newblock {\em Internat. J. Numer. Methods Engrg.}, 95(11):953--990, 2013.

\bibitem{Toselli:2005:DDM}
Andrea Toselli and Olof Widlund.
\newblock {\em Domain Decomposition Methods - Algorithms and Theory}, volume~34
  of {\em Springer Series in Computational Mathematics}.
\newblock Springer, 2005.

\bibitem{FFD:Tournier:2020}
Pierre-Henri Tournier and Fr{\'e}d{\'e}ric Nataf.
\newblock {FFDDM}: Freefem domain decomposition methd.
\newblock {https://doc.freefem.org/documentation/ffddm/index.html}, 2019.

\bibitem{Tu:2015:FDT}
Xuemin Tu and Jing Li.
\newblock A {FETI}-{DP} type domain decomposition algorithm for
  three-dimensional incompressible {S}tokes equations.
\newblock {\em SIAM J. Numer. Anal.}, 53(2):720--742, 2015.

\bibitem{Widlund:2021:block}
O~Widlund, Stefano Zampini, S~Scacchi, and L~Pavarino.
\newblock Block feti--dp/bddc preconditioners for mixed isogeometric
  discretizations of three-dimensional almost incompressible elasticity.
\newblock {\em Mathematics of Computation}, 90(330):1773--1797, 2021.

\bibitem{Notay:2019:CSI}
Notay Y.
\newblock Convergence of some iterative methods for symmetric saddle point
  linear systems.
\newblock {\em SIMAX}, 40:122--146, 2019.

\end{thebibliography}

\end{document}